\newsavebox{\blah}
\savebox{\blah}{\scriptsize $\{$}
\newsavebox{\blahb}
\savebox{\blahb}{\scriptsize $\}$}
\newcommand{\ie}{{\em i.e.}}
\newcommand{\eg}{{\em e.g.}}
\newcommand{\mathify}[1]{\ifmmode{#1}\else\mbox{$#1$}\fi}
\newcommand{\blackslug}{\rule{7pt}{7pt}}
\newcommand{\Rule}[2]{          
  \begin{array}{c}
  #1 \\\hline
  #2
  \end{array}}
\newcommand{\mletml}[3]{\MLET\;#1=#2\;\MIN\;#3}
\newcommand{\ifthenelse}[3]{\IF\;#1\;\THEN\;#2\;\ELSE\;#3}
\newcommand{\mifthenelse}[3]{\MIF\;#1\;\MTHEN\;#2\;\MELSE\;#3}
\newcommand{\enable}[2]{\ENABLE\;#1\;\IN\;#2}
\newcommand{\chk}[2]{\CHK\;#1\;\FOR\;#2}
\newcommand{\test}[3]{\TEST\;#1\;\THEN\;#2\;\ELSE\;#3}
\newcommand{\signs}[2]{\SIGNS\;#1\;#2}
\newcommand{\vmlet}[2]{\begin{array}[t]{l} 
                            \MLET\;{#1}\;\MIN\\
                            {#2}
                            \end{array}}
\newcommand{\BOOL}{\mbox{\texttt{bool}}}
\newcommand{\ELSE}{\mbox{\texttt{else}}}
\newcommand{\IF}{\mbox{\texttt{if}}}
\newcommand{\FOR}{\mbox{\texttt{for}}}
\newcommand{\IN}{\mbox{\texttt{in}}}
\newcommand{\MIN}{\mbox{\textbf{in}}}
\newcommand{\LET}{\mbox{\texttt{let}}}
\newcommand{\MLET}{\mbox{\textbf{let}}}
\newcommand{\REC}{{\texttt{rec}}}
\newcommand{\THEN}{{\texttt{then}}}
\newcommand{\MTHEN}{{\textbf{then}}}
\newcommand{\MELSE}{{\textbf{else}}}
\newcommand{\MIF}{{\textbf{if}}}
\newcommand{\TRUE}{{\texttt{true}}}
\newcommand{\UNIT}{\ifmmode%
\mathchoice{{\texttt{unit}}}
           {{\texttt{unit}}}
           {\mbox{\scriptsize\tt unit}}
           {\mbox{\tiny\tt unit}}
\else{\texttt{unit}}
\fi}
\newcommand{\A}{\mathcal{A}}
\renewcommand{\P}{\mathcal{P}}
\newcommand{\dom}{{\it dom}}
\newcommand{\ldb}{[\![}
\newcommand{\rdb}{]\!]}
\newcommand{\means}[1]{\ldb {#1}\rdb}
\newcommand{\union}{\cup}
\newcommand{\intersect}{\cap}
\newcommand{\proves}{\vdash}
\newcommand{\ext}[3]{[#1\mid#2\!\mapsto\!#3]}
\newcommand{\recext}[5]{[#1\mid#2\!\mapsto\!#3,#4\!\mapsto\!#5]}
\newcommand{\lam}[2]{\lambda #1.\; #2} 
\newcommand{\Empty}{\varnothing}
\newcommand{\tuple}[1]{\langle #1 \rangle}
\def\lpields#1{\stackrel{#1}{\longrightarrow}}
\newcommand{\annoto}[1]{\lpields {\scriptscriptstyle #1}}
\newcommand{\recdecl}[3]{\REC\;#1(#2) = #3}
\newcommand{\fix}{\textit{fix}}
\newcommand{\lub}{\sqcup}
\newcommand{\Lub}{\bigsqcup}
\newcommand{\nats}{{\mathbb{N}}}
\newcommand{\PRINC}{{\mathsf{Principals}}}
\newcommand{\PRIV}{{\mathsf{Privileges}}}
\newcommand{\ENABLE}{\mbox{\texttt{dopriv}}}
\newcommand{\CHK}{\mbox{\texttt{check}}}
\newcommand{\TEST}{\mbox{\texttt{test}}}
\newcommand{\SIGNS}{\mbox{\texttt{signs}}}
\newcommand{\safe}{\textsf{safe}}
\newcommand{\pure}{\textsf{pure}}
\renewcommand{\check}{\textsf{chk}}
\newcommand{\privs}{\textsf{privs}}
\newcommand{\Sim}{\textsf{sim}}
\newcommand{\Rel}{\textsf{rel}}
\newcommand{\stacks}{\textsf{Stacks}}
\newcommand{\letdecl}[2]{\LET#1\;\IN\;#2}
\newcommand{\LAM}[2]{{\mathtt{fun}\; #1.\; #2}}
\newcommand{\D}{D} 
\newcommand{\ty}{t} 
\renewcommand{\th}{\theta} 
\newcommand{\n}{n} 
\newcommand{\p}{p} 
\newcommand{\Ps}{\Pi} 
\newcommand{\h}{h} 
\newcommand{\meanss}[1]{(\![ {#1}]\!)} 
\renewcommand{\implies}{\mathbin{\:\Rightarrow\:}}
\renewcommand{\iff}{\mathbin{\:\Leftrightarrow\:}}
\newtheorem{theorem}{Theorem}[section]
\newtheorem{lemma}[theorem]{Lemma}
\newtheorem{factx}[theorem]{Fact}
\newtheorem{definition}[theorem]{Definition}
\newenvironment{proof}{\begin{trivlist}\item[\hskip\labelsep{\bf
Proof:}]}{\blackslug\end{trivlist}}
\newenvironment{xproof}{\begin{trivlist}\item[\hskip\labelsep{\bf  
Proof:}]}{\end{trivlist}}
\DeclareSymbolFont{italics}{OT1}{cmr}{m}{it}
\DeclareMathSymbol{a}{\mathalpha}{italics}{"61}
\DeclareMathSymbol{b}{\mathalpha}{italics}{"62}
\DeclareMathSymbol{c}{\mathalpha}{italics}{"63}
\DeclareMathSymbol{d}{\mathalpha}{italics}{"64}
\DeclareMathSymbol{e}{\mathalpha}{italics}{"65}
\DeclareMathSymbol{f}{\mathalpha}{italics}{"66}
\DeclareMathSymbol{g}{\mathalpha}{italics}{"67}
\DeclareMathSymbol{h}{\mathalpha}{italics}{"68}
\DeclareMathSymbol{i}{\mathalpha}{italics}{"69}
\DeclareMathSymbol{j}{\mathalpha}{italics}{"6A}
\DeclareMathSymbol{k}{\mathalpha}{italics}{"6B}
\DeclareMathSymbol{l}{\mathalpha}{italics}{"6C}
\DeclareMathSymbol{m}{\mathalpha}{italics}{"6D}
\DeclareMathSymbol{n}{\mathalpha}{italics}{"6E}
\DeclareMathSymbol{o}{\mathalpha}{italics}{"6F}
\DeclareMathSymbol{p}{\mathalpha}{italics}{"70}
\DeclareMathSymbol{q}{\mathalpha}{italics}{"71}
\DeclareMathSymbol{r}{\mathalpha}{italics}{"72}
\DeclareMathSymbol{s}{\mathalpha}{italics}{"73}
\DeclareMathSymbol{t}{\mathalpha}{italics}{"74}
\DeclareMathSymbol{u}{\mathalpha}{italics}{"75}
\DeclareMathSymbol{v}{\mathalpha}{italics}{"76}
\DeclareMathSymbol{w}{\mathalpha}{italics}{"77}
\DeclareMathSymbol{x}{\mathalpha}{italics}{"78}
\DeclareMathSymbol{y}{\mathalpha}{italics}{"79}
\DeclareMathSymbol{z}{\mathalpha}{italics}{"7A}
\DeclareMathSymbol{A}{\mathalpha}{italics}{"41}
\DeclareMathSymbol{B}{\mathalpha}{italics}{"42}
\DeclareMathSymbol{C}{\mathalpha}{italics}{"43}
\DeclareMathSymbol{D}{\mathalpha}{italics}{"44}
\DeclareMathSymbol{E}{\mathalpha}{italics}{"45}
\DeclareMathSymbol{F}{\mathalpha}{italics}{"46}
\DeclareMathSymbol{G}{\mathalpha}{italics}{"47}
\DeclareMathSymbol{H}{\mathalpha}{italics}{"48}
\DeclareMathSymbol{I}{\mathalpha}{italics}{"49}
\DeclareMathSymbol{J}{\mathalpha}{italics}{"4A}
\DeclareMathSymbol{K}{\mathalpha}{italics}{"4B}
\DeclareMathSymbol{L}{\mathalpha}{italics}{"4C}
\DeclareMathSymbol{M}{\mathalpha}{italics}{"4D}
\DeclareMathSymbol{N}{\mathalpha}{italics}{"4E}
\DeclareMathSymbol{O}{\mathalpha}{italics}{"4F}
\DeclareMathSymbol{P}{\mathalpha}{italics}{"50}
\DeclareMathSymbol{Q}{\mathalpha}{italics}{"51}
\DeclareMathSymbol{R}{\mathalpha}{italics}{"52}
\DeclareMathSymbol{S}{\mathalpha}{italics}{"53}
\DeclareMathSymbol{T}{\mathalpha}{italics}{"54}
\DeclareMathSymbol{U}{\mathalpha}{italics}{"55}
\DeclareMathSymbol{V}{\mathalpha}{italics}{"56}
\DeclareMathSymbol{W}{\mathalpha}{italics}{"57}
\DeclareMathSymbol{X}{\mathalpha}{italics}{"58}
\DeclareMathSymbol{Y}{\mathalpha}{italics}{"59}
\DeclareMathSymbol{Z}{\mathalpha}{italics}{"5A}
\title{A Simple Semantics and Static Analysis for Stack Inspection}
\author{Anindya Banerjee 
\institute{IMDEA Software Institute\\ Madrid, Spain}\thanks{Partially 
supported by NSF grant EIA-980635, by Madrid Regional Government MINECO Project TIN2009-14599-C03-02 Desafios and EU NoE Project 256980 Nessos.
}
\email{anindya.banerjee@imdea.org}
\and
David A.~Naumann
\institute{Stevens Institute of Technology\\ Hoboken, NJ 07030, USA}\thanks{Partially supported by NSF grants INT-9813854 and CNS-1228930.}
\email{naumann@cs.stevens.edu}
}
\begin{document}
\maketitle

\begin{abstract}
The Java virtual machine and the .NET common language runtime feature an access control mechanism specified operationally in
terms of run-time stack inspection.
We give a denotational semantics in ``eager'' form, and show that it is equivalent to
the ``lazy'' semantics using stack inspection.  
We give a static analysis of safety, \ie, the absence of security errors,
that is simpler than previous proposals.   
We identify several program transformations that can be used to remove run-time
checks.  We give complete, detailed proofs for safety of the analysis and for the
transformations, exploiting compositionality of the eager semantics.
\end{abstract}


\section{Introduction}
\label{sec:intro}

System security depends in part on protecting resources through
specified access control policies. For example, a policy may allow
only some users the privilege to write the password file.  A typical
implementation of the policy found \eg, in UNIX operating systems,
involves an access control list $\A$ which associates with each user
name $\n$ their set of privileges $\A(\n)$.  When a program is running
it has an associated user, normally the user who invoked the program.
To write a file, a program for user $\n$ must make a system call, and
that system code checks whether $\A(\n)$ includes the privilege of
writing to the file. In order for users to be able to change their
passwords, the system code for this task executes in a special mode
(``setuid'' in UNIX); the effective user is the owner of the code
(say, \textit{root}) rather than the originator of the call ($\n$,
which can write some files but not the password file).

The Java and .NET platforms offer a similar but more general security system \cite{Gong99,LaMacchiaEtal02}.
Instead of code being owned by a user or
by ``the system'', there can be code from a number of sources, called
\emph{principals}, which can be offered varying degrees of trust.
Moreover, instead of associating a principal with a loadable
executable file, principals can be associated with fragments such as
class declarations.  Another refinement is that
privileges must be explicitly enabled, by an operation called
\texttt{doPrivileged}.  The intent is that a program only enables its
privileges when they are needed; this ``principle of least privilege''
\cite{Gong99} may help isolate the effect of security bugs and may
facilitate static analysis.  Before executing a dangerous operation, a
check is made that the associated privilege has been enabled and is
authorized for the current principal.  This check is specified in
terms of an implementation called \textit{stack inspection}.  Each
stack frame is marked with the principal associated with the code for
that frame, and the frame also records the privileges that have been
enabled.  This is used by procedure \texttt{checkPermission} which
inspects the current stack.    

The above description of the security system is an operational one.
While the mechanism itself is easy to understand, it may over-constrain implementations,
and it is difficult to analyze.
Analysis is of interest, e.g., to determine whether a program can exhibit security exceptions when given its expected permissions.  
Implementations of procedure calls do not always push stack, e.g., owing to inlining or tail call optimization.  
To understand the security properties achieved, and to optimize performance, we need analyses that capture the security model more abstractly.  

Our contribution is threefold: 
(i) We give a denotational semantics in ``eager'' form,
and show that it is equivalent to the ``lazy'' semantics using stack inspection.
(ii) We give a static analysis of \emph{safety}, \ie, the absence of security errors.
(iii) We identify several program transformations including some that can be used to remove run-time
checks. 
We give detailed proofs for the analysis and for the transformations, exploiting compositionality of the eager semantics and simplicity of program equivalence in denotational semantics.


\paragraph{Related work.}

Skalka and Smith \cite{SkalkaS:ICFP00} give an operational semantics
and use it to justify a static analysis of safety specified by a type system.  
Their type system is complicated by the choice of using a constraint system
which is the basis for a type inference algorithm.  We use a similar type system, but 
prefer to separate the specification of an analysis from algorithms to perform the
analysis.  We also include recursion in the language.
Their semantics is easily seen to model the operational descriptions of stack
inspection, but it has the usual shortcomings of operational semantics; for example, proofs ultimately go by induction on computations. 

Wallach, Appel and Felten~\cite{WallachAF00} model the mechanism with
an operational semantics that manipulates formulas in a logic
of authentication \cite{AbadiBLP93}.  They show that the particular logical
deductions corresponding to \texttt{checkPermission} can be decided
efficiently, and propose an implementation called ``security passing
style'' in which the security state is calculated in advance.  The
only result proven is equivalence of the two implementations.  They do
not include recursion or higher order functions, and
the formal semantics is not made explicit.  Although the use of logic
sheds some light on the security properties achieved by the mechanism,
the approach requires a considerable amount of theory that is not
directly germane to analyzing safety or justifying optimizations.

Security passing style can be seen as a presentation of the eager
means of evaluating security checks mentioned by Li Gong~\cite{Gong99}.  
The eager semantics facilitates proofs, but JVM and CLR implementations
use lazy semantics which appears to have better performance \cite{Gong99,WallachAF00,LaMacchiaEtal02}.  

Pottier \emph{et al.} \cite{PottierSS05} formalize the eager semantics by a
translation into a lambda calculus, $\lambda_{\mathit{sec}}$, with operations on sets.  Using an operational semantics for the
calculus, a proof is sketched of equivalence with stack semantics.  Using a very
general framework for typing, a static analysis is given and a safety
result is sketched.  The language 
extends the language of 
Skalka and Smith \cite{SkalkaS:ICFP00} by adding permission tests.
The works \cite{SkalkaS:ICFP00} and \cite{PottierSS05} aim to replace dynamic checks by static ones, 
but do not consider program transformations as such.

This paper originated as a technical report more than a decade ago \cite{tr2001-1}.
At about the same time, and independently, Fournet and Gordon~\cite{FournetG03} 
investigated an untyped variant of $\lambda_{\mathit{sec}}$. 
They develop an equational theory that can be used to prove the correctness of code optimizations in the presence of stack inspection. A prime motivation for their work was the folklore that well-known program optimizations such as inlining and tail call elimination are invalidated by stack inpection. Their technical development uses small-step operational semantics and contextual equivalence of programs.
To prove 
an extensive collection of program equivalences they develop a form of applicative bisimilarity. Additionally, they prove the equivalence of the lazy and eager implementations of stack inspection. This equivalence is also proved in Skalka's Ph.D. dissertation~\cite[Theorem 4.1, Chapter 4]{skalka:phd}, where lazy and eager are termed ``backward'' and ``forward'' stack inspection respectively.



Where Fournet and Gordon point out how tail call elimination can be invalidated by stack inspection, Clements and Felleisen~\cite{ClementsF04}
consider program semantics at the level of an abstract machine, namely the 
CESK machine. With this semantics they are able to show tail call optimization can be validated in full generality, and with its expected space savings. 
This is explored further by Ager et al~\cite{Ager05afunctional}.
They inter-derive a reduction semantics for the untyped
variant of the $\lambda_{\mathit{sec}}$-calculus, an abstract machine,
and a natural semantics, both without and with tail-call optimization.
By unzipping the context in the abstract machine, they connect these
semantics to Wallach \emph{et al.}'s security passing
style, characterize stack inspection as a
computational monad, and combine this monad with the monad of
exceptions.\footnote{Thanks to Olivier Danvy for communicating this explanation.}

We treat a simply typed language similar to $\lambda_{\mathit{sec}}$, but with recursion.
In contrast to the cited works we use a denotational semantics, which is 
straightforward; 
in fact, once the meanings
of types are specified, the rest of the specification (\ie, meanings of
expressions) follows easily.\footnote{Adding state is straightforward \cite{BanerjeeNaumann02a}, but here  we follow the cited works and confine attention to applicative expressions.}
The simplicity of our model makes it
possible to give a self-contained formal semantics and succinct but complete formal
proofs.  For example, denotational equality is a congruence simply because the semantics is compositional.
We have not formally connected the semantics with an operational one.
Adequacy seems obvious. Full abstraction is not obvious, but we have proved many of the contextual equivalences of Fournet and Gordon~\cite{FournetG03} and expect the remainder to be straightforward to show.

In addition to considering program transformations, Fournet and Gordon~\cite{FournetG03} 
 address the question of what security properties can be enforced by stack inspection.  They consider a variation that tracks history in the sense of what code has influenced the result of a computation.  Pistoia et al \cite{PistoiaBanerjeeNaumann} propose a variation that tracks implicit influences as well.  The authors \cite{BanerjeeNaumann03b} propose another combination of information flow tracking with stack inspection, using a type and effect system  where security types for functions are dependent on available permissions. 
In the interim, other code-based access mechanisms have been introduced (e.g., static permissions in the Android platform) and there have been further development of static analyses for security properties based on linear temporal logic~\cite{BessonJM01,BessonBFG04,BessonLJ05,skalkaSVh08},
but there seems to be little additional work on program equivalence in the presence of stack inspection. 


\paragraph{Outline.}

The next section explains stack inspection informally, and it introduces our language.
Section~\ref{sec:ds} gives the eager denotational semantics.
Section~\ref{sec:sa} gives the static analysis for safety, including examples and correctness proof.
Section~\ref{sec:tran} proves a number of representative program transformations.
Section~\ref{sec:use} shows how all checks can be removed from safe programs.
Section~\ref{sec:stk} gives the stack-based denotational semantics and shows that it is equivalent to the eager semantics.
Section~\ref{sec:disc} concludes.

\section{Overview and language} 
\label{sec:ov}

Each declared procedure is associated with a principal $\n$.  We call
$\n$ the \emph{signer}, and write $\signs{\n}{e}$ for a signed
expression, because typically $\n$ is given by a cryptographic
signature on a downloaded class file.  During execution, each stack
frame is labeled with the principal that signs the function, as well
as the set $P$ of privileges that have been explicitly enabled during
execution of the function.  For our purposes, a frame is a pair
$\tuple{\n,P}$, and a nonempty stack is a list $\tuple{\n,P}::S$ with
$\tuple{\n,P}$ the top.  There should be an initial stack $S_0 =
\tuple{\n_0,\Empty}::nil$ for some designated $\n_0$.  An expression
is evaluated in a stack $S$ and with an environment $\h$ that provides
values for its free variables.

Java provides operations to enable and disable a privilege, \ie, to
add it to the stack frame or remove it.  Normally these are used in
bracketed fashion, as provided by procedure \texttt{doPrivileged} which is given a
privilege $\p$ and an expression $e$ to evaluate.  It enables $\p$,
evaluates $e$, and then disables $\p$.  Our construct is written
$\enable{\p}{e}$.  The effect of $\enable{\p}{e}$ in stack
$\tuple{\n,P}::S$ is to evaluate $e$ in stack
$\tuple{\n,P\union\{\p\}}::S $, that is, to assert $\p$ in the
current frame.
This is done regardless of whether $\p$ is authorized for $\n$.

Java's \texttt{checkPermission} operation checks whether a certain
privilege has been enabled and is authorized for the current
principal.  Checking is done by inspecting the current stack.  Each
dangerous code fragment should be guarded by a check for an associated
privilege, so that the code cannot be executed unless the check has succeeded.  
(This can be assured by inspection of the code, or by other forms of analysis 
\cite{CentonzeFP07} but is beyond the scope of our paper.)
In our syntax, a guarded expression is written
$\chk{\p}{e}$.  The execution of an expression checked for privilege
$\p$ is to raise a security error, which we denote by $\star$, unless
the following predicate is true of $\p$ and the current stack.
\begin{eqnarray*}
\check(\p,nil) & \iff & \mathsf{false} \\
\check(\p,(\tuple{\n,P}::S)) & \iff & 
 \p\in \A(\n) \land (\p\in P \lor \check(p,S)) 
\end{eqnarray*}
That is, a privilege is enabled for a stack provided
there is some frame $\tuple{\n,P}$ with $\p\in P$ and $\p$ is authorized for
$\n$ and is authorized for all principals in frames below this one. 


A direct implementation in these terms requires inspecting some or all
of the stack frames.  The implementation is lazy in that no
checking is performed when a privilege is enabled, only when it is
needed to actually perform a guarded operation. On the other hand,
each check incurs a significant cost, and in secure code the checks
will never fail.  Static analysis can detect unnecessary checks, and
justify security-preserving transformations.

A stack $S$ determines a set $\privs\;S$ of enabled, authorized
privileges, to wit:
\[ \p\in\privs~S \iff \check(\p, S) \]
This gives rise to a simple form of eager semantics: instead of
evaluating an expression in the context of a stack $S$, we use
$\privs\;S$, along with the current principal which appears on top of
$S$.  The eager, stack-free semantics is given in Section~\ref{sec:ds}
and we will use this semantics exclusively in the static analysis and 
program transformations that follow in Sections~\ref{sec:sa} and ~\ref{sec:tran}.

A denotational semantics that uses explicit stacks will be deferred until Fig.~\ref{fig:ss} of Section~\ref{sec:stk}. As mentioned previously, we will then take up the equivalence of the two semantics.

The language constructs are strict in $\star$: if a subexpression
raises a security error, so does the entire expression.  In Java, security
errors are exceptions that can be caught.  Thus it is possible for a
program to determine whether a \texttt{checkPermission} operation will
succeed.  Rather than model the full exception mechanism, we
include a construct $\test{\p}{e_1}{e_2}$ which evaluates $e_1$ if
$\check(\p,S)$ succeeds in the current stack $S$, and evaluates $e_2$
otherwise.  Note that security error $\star$ is raised only by the
$\CHK$ construct, not by $\TEST$ or $\ENABLE$.

In Java, the call of a procedure of a class signed by, or otherwise
associated with, $\n$, results in a new stack frame for the method,
marked as owned by $\n$.  We model methods as function abstractions,
but whereas Skalka and Smith use signed abstractions, we include a
separate construct
$\signs{\n}{e}$.\footnote{\label{fn:pfg}%
Fournet and Gordon~\cite{FournetG03} also use a freely applicable construct for signing.
Moreover they identify principals with sets of permissions: their ``framed'' expression $R[e]$ 
is like our $\signs{\n}{e}$ for $\n$ with $\A(n) = R$.}  
Evaluation of $\signs{\n}{e}$ in stack $S$ goes by evaluating $e$ in
stack $(\tuple{\n,\Empty}::S)$.  For example, given a stack $S$, the
evaluation of the application
\[
(\LAM{x}{\signs{user}{writepass(x)}})\mbox{``myName''}
\]
amounts to evaluating $writepass(\mbox{``myName''})$
in the stack $(\tuple{user,\Empty}::S)$. 

We separate $\signs$ from abstractions because it helps disentangle
definitions and proofs, \eg, these constructs are treated
independently in our safety result.  On the other hand, unsigned
abstractions do not model the Java mechanism.  In our consistency
result, Theorem~\ref{thm:cons}, we show that our semantics is
equivalent to stack inspection for all \emph{standard expressions},
\ie, those in which the body of every abstraction is signed.

\subsection{Syntax and typing}
\label{sec:la}

Given are sets $\PRINC$ and $\PRIV$, and a fixed access control list
$\A$ that maps $\PRINC$ to sets of privileges.  In the grammar for
data types and expressions, $\n$ ranges over $\PRINC$ and $\p$ over
$\PRIV$.  Application associates to the left.  We include recursive
definitions for expressiveness, and simple abstractions $\LAM{x}{e}$
which, while expressible using $\mathtt{letrec}$, are easier to
understand in definitions and proofs.  For simplicity, the only
primitive type is $\BOOL$, and the only type constructor is for
functions.  Products, sums, and other primitive types can be added
without difficulty. Throughout the paper we use $\TRUE$ to exemplify the treatment of constants in general.
\begin{eqnarray*}
\ty &::=& \BOOL \mid (\ty\to\ty)\\
e &::=& \TRUE \mid x \mid \ifthenelse{e}{e_1}{e_2} \mid \\
& &     \LAM{x}{e} \mid e_1\;e_2
          \mid \letdecl{\recdecl{f}{x}{e_1}}{e_2} \mid \\
& &       \signs{\n}{e} \mid \enable{\p}{e} \mid \chk{\p}{e} \mid 
          \test{\p}{e_1}{e_2} 
\end{eqnarray*}

A signed abstraction $^{\n}\lambda x.e$ in the language of Skalka and
Smith is written $\LAM{x}{\signs{\n}{e}}$ in ours.  Our
safety result can be proved without restriction to expressions of this
form.  But for the eager semantics to be equivalent to stack
semantics, it is crucial that function bodies be signed so the
semantics correctly tracks principals on behalf of which the body of
an abstraction is evaluated.

\begin{definition}[Standard expression]
\label{def:std}
An expression is standard if for every subexpression
$\LAM{x}{e}$ or $\letdecl{\recdecl{f}{x}{e}}{e_1}$ we have that $e$ is
$\signs{\n}{e'}$ for some $\n,e'$.
\end{definition}


Well-formed expressions are characterized by typing judgements
$\D\proves e:\ty$ which express that $e$ has type $\ty$ where free
identifiers are declared by $D$.  A typing context $D$ is a labeled
tuple of declarations $\{x_1:\ty_1,\ldots,x_k:\ty_k\}$. We write
$D,x:\ty$ for the extended context $\{x_1:\ty_1,\ldots,x_k:\ty_k,x:\ty
\}$, and $D.x_i$ for the type of $x_i$.  The typing rules are given in
Figure~\ref{fig:typing}.

\begin{figure*}
\hrule
\medskip
\[
\begin{array}{cc}
\D\proves \TRUE:\BOOL 
\\
\D,x:\ty\proves x:\ty &
\Rule{\D\proves e:\BOOL  \qquad
      \D\proves e_1:\ty \qquad 
      \D\proves e_2:\ty}
     {\D\proves\ifthenelse{e}{e_1}{e_2}:\ty} 
\\[2ex]
\Rule{\D,x:\ty_1\proves e:\ty_2}
     {\D\proves\LAM{x}{e}:\ty_1\to\ty_2} &
\Rule{\D\proves e_1: \ty_1\to\ty_2\qquad 
      \D\proves e_2: \ty_1}
     {\D\proves e_1\;e_2: \ty_2} 
\\[2ex]
\multicolumn{2}{c}{
\Rule{\D, f:\ty_1\to\ty_2, x:\ty_1
      \proves e_1:\ty_2
      \qquad
      \D, f:\ty_1\to\ty_2\proves e_2:\ty}
     {\D\proves\letdecl{\recdecl{f}{x}{e_1}}{e_2}: \ty} 
}
\\[2ex]
\Rule{\D\proves e:\ty}
     {\D\proves \signs{\n}{e}:\ty} &
\Rule{\D\proves e:\ty}
     {\D\proves \enable{\p}{e}:\ty} 
\\[2ex]
\Rule{\D\proves e:\ty}
     {\D\proves \chk{\p}{e}:\ty} &
\Rule{\D\proves e_1:\ty\qquad
      \D\proves e_2:\ty}
     {\D\proves \test{\p}{e_1}{e_2}:\ty}
\end{array}
\]
\medskip
\hrule
\medskip
\caption{Typing rules.}
\label{fig:typing}
\end{figure*}

\subsection{The password example}
\label{sec:pass}

As an example of the intended usage, we consider the problem of protecting the
password file, using a privilege $\p$ for changing password and $w$ for writing to
the password file.
The user is authorized to change passwords:
$\A(user)=\{ \p\}$.
Root is authorized to change passwords and to write the password file:
$\A(root)=\{ \p, w \}$.
Suppose $hwWrite$ is the operating system call which needs to be protected from
direct user access. The system provides the following code, which guards
$hwWrite$ with the privilege $w$.
\[ 
\begin{array}{lcl}
writepass &=&
\LAM{x}{\signs{root}{\chk{w}{hwWrite(x,\mbox{``/etc/password''})}}} \\
passwd &=& \LAM{x}{\signs{root}{\chk{p}{\enable{w}{writepass(x)}}}}  
\end{array}
\]
Consider the following user programs.
\[ 
\begin{array}{lcl}
bad1 &=& \signs{user}{writepass(\mbox{``mypass''})}\\
bad2 &=& \signs{user}{\enable{w}{writepass(\mbox{``mypass''})}} \\ 
use &=& \signs{user}{\enable{\p}{passwd(\mbox{``mypass''})}} 
\end{array}
\]
Here $bad1$ raises a security exception because $writepass$ checks for privilege $w$
which is not possessed by $user$.  The user can try to enable $w$, as in $bad2$, but
because $w$ is not authorized for $user$ the exception is still raised.  By contrast,
$use$ does not raise an exception: function $passwd$ checks for privilege $p$
which is possessed by $user$, and it enables the privilege $w$ needed by
$writepass$.  
Using transformations discussed in Section~\ref{sec:tran}, checks that never fail can
be eliminated. 
For example, the analysis will show that $use$ is safe, and the transformations will
reduce $use$ to 
\[ \signs{user}{ \signs{root}{ 
               hwWrite(\mbox{``mypass''},\mbox{``/etc/password''})}} 
\]


\section{Denotational semantics}
\label{sec:ds}

This section gives the eager denotational semantics. 

\subsection{Meanings of types and type contexts}

A \emph{cpo} is a partially ordered set with least upper bounds of
ascending chains; it need not have a least element. 
Below we define, for each type $\ty$, a cpo $\means{\ty}$.
We assume that $\bot$ and $\star$ are two values not in 
$\{ \mathsf{true},  \mathsf{false}\}$ and not functions; this will ensure that 
$\{ \bot,\star\} \cap \means{\ty}=\Empty$ for all $\ty$. We will identify
$\bot$ with non-termination and $\star$ with security errors.
For cpo $C$, define $C_{\bot\star}=C\union\{\bot,\star\}$,  
ordered as the disjoint union of $C$ with $\{\star\}$, lifted with $\bot$.  
That is, for any $u,v\in C_{\bot\star}$, define
$u\leq v$ iff $u=\bot$, $u=v$, or $u$ and $v$ are in $C$ and $u\leq v$ in $C$.

We define $\means{\BOOL} = \{\mathsf{true}, \mathsf{false}\}$,
ordered by equality. We also take the powerset $\P(\PRIV)$ to be a cpo ordered by 
equality.  
Define \[ \means{\ty_1\to\ty_2}  \;  =  \;  \P(\PRIV)\to\means{\ty_1}\to
\means{\ty_2}_{\bot\star}\] where $\to$ associates to the right and 
denotes continuous function space, ordered pointwise.  Note that lubs are 
given pointwise. Also, $\means{\ty_1\to\ty_2}$ does not contain $\bot$ but it 
does have a least element, namely the constant function 
$\lam{P}{\lam{d}{\bot}}$.

Principals behave in a lexically scoped way.  By contrast, privileges are
dynamic and vary during execution; this is reflected in the semantics of the function
type.  

Let $\D=\{x_1:\ty_1,\ldots,x_k:\ty_k\}$ be a type context. Then
$\means{\D}$ is defined to be the set
$\{x_1:\means{\ty_1},\ldots,x_k:\means{\ty_k}\}$ of labeled tuples of
appropriate type.  If $\h$ is such a record, we write $\h.x_i$ for the
value of field $x_i$.  If $\D$ is the empty type context $\Empty$,
then the only element of $\means{\D}$ is the empty record $\{\}$.  For
$\h\in\means{D}$ and $d\in\means{\ty}$ we write $\ext{\h}{x}{d}$ for the
extended record in $\means{D,x:\ty}$.

\subsection{Meanings of expressions}\label{sec:mexpr}

An expression judgement denotes a function
\[
\means{\D\proves e:\ty} \; \in \; \PRINC\to\P(\PRIV)\to\means{\D}\to\means{\ty}_{\bot\star}
\]
Given a principal $\n$, a set $P\in\P(\PRIV)$ denoting privileges required
by $e$, and environment $\h\in\means{\D}$, the meaning of $\means{\D\proves e:\ty}\n P \h$
is either $\bot$ or $\star$ or an element of $\means{\ty}$.

In contrast with the work of Fournet and Gordon we do not restrict $P$ to be a subset of $\A(n)$,
though it can easily be done ---simply by giving the denotation this dependent type:
\begin{equation}\label{eq:dependDenot}
\means{\D\proves e:\ty} \; \in \; (\n:\PRINC)\to\P(\A(\n))\to\means{\D}\to\means{\ty}_{\bot\star}
\end{equation}
In programs of interest, signed at the top level, most expressions will in fact be applied to permission sets that satisfy the restriction. 
Later we observe that the restriction is need for validity of some transformations, but surprisingly few of them.  


In the denotational semantics (Figure~\ref{fig:ds}), we use the metalanguage 
construct, $\mletml{d}{E_1}{E_2}$, with the following semantics: if the value 
of $E_1$ is either $\bot$ or $\star$ then that is the value of the entire let 
expression; otherwise, its value is the value of $E_2$ with $d$ bound to
the value of $E_1$. 
The semantics of if-then-else is $\star$-strict in the guard.
We also write $P\sqcup_{\n}\{\p\}$ for 
$\mifthenelse{\p\in\A(\n)}{P\union\{\p\}}{P}$.

The semantics is standard for the most part. We will only explain the
meanings of the expressions that directly concern security. In what follows,
we will assume, unless otherwise stated, that expression $e$ is signed by 
principal $\n$ and is computed with privilege set $P$ and in environment $\h$.

The meaning of $\signs{\n'}{e}$ is the meaning of $e$, signed by
$\n'$, computed with privilege set $P\intersect\A(\n')$, in $\h$. 
To illustrate the idea, consider Li Gong's
example~\cite[Section 3.11.2]{Gong99}. A game applet, \textit{applet},
has a method that calls \textit{FileInputStream} to open the file containing
the ten current high scores. In our semantics, this scenario entails finding
the meaning of $\signs{system}{FileInputStream}$, as invoked under some privilege set $P \subseteq\A(applet)$; and, this means we need to
find the meaning of \textit{FileInputStream} (\ie, whether read privileges
are enabled) under the privilege set $P\intersect\A(system)$. Assuming 
\textit{system} has all privileges, this reduces to checking if 
\textit{applet} has been granted permission to read. If it has not been granted
the permission, the file will not be read, even though it calls system code
to do so.

The meaning of $\enable{\p}{e}$ is the meaning of
$e$ computed with privilege set $P\union\{\p\}$ if $\p\in\A(\n)$, and is the 
meaning of $e$ computed with privilege set $P$ if $\p\not\in\A(\n)$. 
The meaning of $\chk{\p}{e}$ is a security error if $\p\not\in P$; otherwise, 
the meaning is that of $e$. Finally, the meaning of $\test{\p}{e_1}{e_2}$
is the meaning of $e_1$ or $e_2$ according as $\p\in P$ or $\p\not\in P$.


\begin{figure*}
\hrule
\medskip
\[
\begin{array}{lcl}
\means{\D\proves\TRUE:\BOOL}\n P \h &=& \mathsf{true}
\\
\means{\D,x:\ty \proves x:\ty}\n P \h &=& \h.x 
\\
\means{\D\proves\ifthenelse{e}{e_1}{e_2}:\ty}\n P \h &  =  &
\vmlet{b\;=\;\means{\D\proves e:\BOOL}\n P \h}
       {\mifthenelse{b}
                    {\means{\D\proves e_1:\ty}\n P \h}
                    {\means{\D\proves e_2:\ty}\n P \h}} 
\\
\means{\D\proves\LAM{x}{e}:\ty_1\to\ty_2}\n P \h &=&
\begin{array}[t]{l}
  \lam{P'\in\P(\PRIV)}{
       \lam{d\in\means{\ty_1}}}{} \\
        \means{\D, x:\ty_1\proves e:\ty_2}\n P' \ext{\h}{x}{d}
\end{array}
\\
\means{\D\proves e_1\;e_2:\ty_2}\n P \h &=&
  \vmlet{f\;=\;\means{\D\proves e_1:\ty_1\to\ty_2}\n P \h}
        {\mletml{d}{\means{\D\proves e_2:\ty_1}\n P \h}{f P d}}
\\
\multicolumn{3}{l}{
\means{\D\proves\letdecl{\recdecl{f}{x}{e_1}}{e_2}:\ty}\n P \h} \\
\multicolumn{3}{l}{
\qquad = 
\vmlet{G(g)\;=\;
\lam{P'}
    {\lam{d}
         {\means{\D,f:\ty_1\to\ty_2,x:\ty_1\proves e_1:\ty_2}\n P' 
          \recext{\h}{f}{g}{x}{d}}}}
{\means{\D,f:\ty_1\to\ty_2\proves e_2:\ty}\n P \ext{\h}{f}{\fix\;G}}} 
\\
\means{\D\proves\signs{\n'}{e}:\ty}\n P \h &=&
\means{\D\proves e:\ty}\n'(P\intersect\A(\n')) \h 
\\
\means{\D\proves\enable{\p}{e}:\ty}\n P \h &=&
\means{\D\proves e:\ty}\n(P\sqcup_{\n}\{\p\}) \h 
\\
\means{\D\proves\chk{\p}{e}:\ty}\n P \h &=&
\mifthenelse{\p\in P}{\means{\D\proves e:\ty}\n P \h}{\star}
\\ 
\means{\D\proves\test{\p}{e_1}{e_2}:\ty}\n P \h 
& = & 
\mifthenelse{\p\in P}{\means{\D\proves e_1:\ty}\n P \h}
            {\means{\D\proves e_2:\ty}\n P \h}
\end{array}
\]
\medskip
\hrule
\medskip
\caption{Denotational semantics}
\label{fig:ds}
\end{figure*}

We leave it to the reader to check that the semantics of each construct is a
continuous function of the semantics of its constituent expressions,
so the semantics of recursion is well defined. 

\section{Static Analysis}
\label{sec:sa}
The denotational semantics in Section~\ref{sec:ds} gives a dynamic or run-time
view of safety; if a program is safe, its execution will not yield $\star$.
In this section, we specify a type system that statically 
guarantees safety; if a program is well-typed in the system then it is safe. 
One may utilize the static analysis for optimizing programs \eg, 
removing redundant checks of privileges at run-time.

The static analysis is specified by an extended form of typing judgement.
The idea is to give not only the type of an expression, but a
principal $\n$ and set $P$ of privileges for which the expression is
safe.  An arrow type $\ty_1\to\ty_2$ denotes functions dependent on a
set of privileges, and the static analysis uses annotated types to
track sets of privileges adequate for safety.  We adopt a Greek
notational style for types in the static analysis.  Letting $\Ps$
range over sets of privileges, annotated types, $\th$, are defined by
\[ \th ::= \BOOL \mid (\th_1\annoto{\Ps}\th_2) \]
For this syntax to be finitary, one could restrict $\Ps$ to finite sets, 
but we have no need for such restriction in our proofs.  
An expression typed $\th_1\annoto{\Ps}\th_2$
signifies that its application may require at least the privileges $\Ps$ 
for safe execution.

\subsection{Type-based analysis}
The analysis is specified by the typing judgement $\Delta;\;\n\proves
e:\th,~\Pi$.  In words, expression $e$ signed by principal $\n$ and
typed in context $\Delta$, has (annotated) type $\th$ and is safe
provided at least the set $\Ps$ of privileges are
enabled. Figure~\ref{fig:sa} gives the specification.
\begin{figure*}
\hrule
\medskip
\[
\begin{array}{cc}
\Delta;\;\n\proves \TRUE:\BOOL, ~\Empty &
\\
\Delta,x:\th;\;\n \proves x:\th, ~\Empty &
\Rule{\Delta,x:\th_1;\;\n\proves e:\th_2, ~\Ps}
     {\Delta;\;\n\proves\LAM{x}{e}:\th_1\annoto{\Ps}\th_2, ~\Empty} 
\\[2ex]
\multicolumn{2}{c}{
\Rule{\Delta;\;\n\proves e_1: \th_1\annoto{\Ps}\th_2, ~\Ps_1 \qquad 
      \Delta;\;\n\proves e_2: \th'_1, ~\Ps_2 \qquad \th'_1\leq\th_1}
     {\Delta;\;\n\proves e_1\;e_2: \th_2, ~\Ps\union\Ps_1\union\Ps_2}
}
\\[2ex]
\multicolumn{2}{c}{
\Rule{\Delta;\;\n\proves e:\BOOL, ~\Ps_1 \qquad
      \Delta;\;\n\proves e_1:\th, ~\Ps_2 \qquad 
      \Delta;\;\n\proves e_2:\th, ~\Ps_3 \qquad}
     {\Delta;\;\n\proves \ifthenelse{e}{e_1}{e_2}:\th, 
      ~\Ps_1\union\Ps_2\union\Ps_3}
}
\\[2ex]
\multicolumn{2}{c}{
\Rule{\Delta, f:\th_1\annoto{\Ps}\th_2, x:\th_1;\;\n
      \proves e_1:\th_2, ~\Ps
      \qquad
      \Delta, f:\th_1\annoto{\Ps}\th_2;\;\n\proves e_2:\th, ~\Ps_1}
     {\Delta;\;\n\proves\letdecl{\recdecl{f}{x}{e_1}}{e_2}: \th, 
      ~\Ps\union\Ps_1}
}
\\[2ex]
\Rule{\Delta;\;\n\proves e:\th, ~\Ps}
     {\Delta;\;\n\proves \chk{\p}{e}:\th,~\Ps\union\{\p\}} &
\Rule{\Delta;\;\n\proves e:\th, \; (\Ps\sqcup_{\n}\{\p\}) }
     {\Delta;\;\n\proves \enable{\p}{e}:\th,~\Ps}
\\[2ex]
\Rule{\Delta;\;\n'\proves e:\th, \;\Ps \qquad \Ps\subseteq\A(\n') }
     {\Delta;\;\n\proves \signs{\n'}{e}:\th,~\Ps} &
\Rule{\Delta;\;\n\proves e_1:\th, ~\Ps_1\qquad
      \Delta;\;\n\proves e_2:\th, ~\Ps_2}
     {\Delta;\;\n\proves \test{\p}{e_1}{e_2}:\th,~\Ps_1\union\Ps_2}
\end{array}
\]
\medskip
\hrule
\medskip
\caption{Static analysis}
\label{fig:sa}
\end{figure*}

Constant $\TRUE$, identifiers, and anonymous functions
of the form $\LAM{x}{e}$ are all safe: they do not require any
privileges be enabled for safe execution. However, the body $e$ in
$\LAM{x}{e}$, may require a set of privileges $\Ps$ be enabled. This
is manifest in the type $\th_1\annoto{\Ps}\th_2$.
The latent privileges, $\Ps$, get exposed during an application, $e_1
e_2$. Say $e_1$ has type $\th_1\annoto{\Ps}\th_2$; if $\Ps_1$ may be
enabled during $e_1$'s execution, and $\Ps_2$ may be enabled during
$e_2$'s execution, then application itself may require $\Ps$ be
enabled; hence $\Ps\union\Ps_1\union\Ps_2$ may be enabled during the
execution of $e_1e_2$.  The application rule also uses subtyping, as discussed in the
sequel.  

The analysis for $\chk{\p}{e}$ requires that in
addition to privileges enabled for $e$, the privilege $\p$ be enabled
so that the check is safe. If $\Ps$ is the set of privileges that may
be enabled during the execution of $\enable{\p}{e}$, then $\p$ can be assumed 
to be enabled during the execution of $e$, provided $\p\in\A(\n)$.

Finally, for $\signs{\n'}{e}$ the only privileges that should be enabled are the ones
authorized for $\n'$.  Note that a signed expression can occur in a term with a different
owner, so it is not the case that $\Ps\subseteq\A(\n)$ for every derivable 
$\Delta;\n\proves e:\th,\; \Ps$.  

\subsection{Subtyping}

Where Skalka and Smith~\cite{SkalkaS:ICFP00} use a constraint-based type system whose constraints subsequently must be solved,\footnote{Pottier \emph{et al.} \cite{PottierSS05} use unification of row variables,
  in a relatively complicated system.}
our analysis is syntax-directed. In some sense,
our system gives minimal types and privilege assumptions. (Pottier \emph{et al.}'s system \cite{PottierSS05} enjoys a principal types property also. Reasoning about minimal security contexts for code invocation is also considered in several papers by Besson \emph{et al.}~\cite{BessonJM01,BessonBFG04,BessonLJ05}.) We do not
formalize this notion, but informally it sheds light on the
specification of the analysis.  In the case of values, such as
variables and abstraction, the privilege set is empty.  In the case of
$\chk{\p}{e}$, the rule adds the checked privilege $\p$ to the
``minimal'' privileges of $e$, and similarly for the other security
constructs.  In the case of conditional, a union is formed from the
``minimal'' privileges of the constituent expressions, and the types
of the constituents are the same as the type of the conditional.  By
contrast, in the case of application $e_1 e_2$, the ``minimal'' types
and privileges for $e_1$ and $e_2$ need not match exactly.  So we
define a relation of subtyping with the informal meaning that
$\th'\leq \th$ provided the privileges required by $\th'$ are
contained in those required by $\th$.  This is significant only in
case $e_2$ has functional type, in which case the latent privileges of
$e_2$ should be among those of $e_1$.

Subtyping is defined as the least relation $\leq$ with
$\BOOL \leq \BOOL$ and, for arrow types, $\th_1\annoto{\Ps_1}\th'_1\leq\th_2\annoto{\Ps_2}\th'_2$
provided $\th_2\leq\th_1$, $\th'_1\leq\th'_2$, and $\Ps_1\subseteq\Ps_2$.

To relate the semantics to the static analysis, we need the ordinary type $\th^*$ obtained
by erasing annotations.
This is defined by induction on $\th$, to wit: 
$\BOOL^* = \BOOL$ and $(\th_1\annoto{\Ps}\th_2)^* = \th_{1}^*\to\th_{2}^*$.
It is easy to show that if $\th_1\leq\th_2$, then $\th_{1}^*=\th_{2}^*$.

Due to subtyping, an expression can have more than one annotated type
and satisfy more than one judgement.  But a derivable judgement
$\Delta;\;\n\proves e:\th,~\Ps$ has only one derivation, which is
dictated by the structure of $e$.  Proofs in the sequel will go by
``induction on $e$'', meaning induction on the derivation of some
judgement $\Delta;\;\n\proves e:\th,~\Ps$.


\subsection{Examples}

For any $\n$, the expressions in the password example (Section~\ref{sec:pass}) can be analyzed as follows.
\[
\begin{array}{l}
\Empty; \ \n\proves writepass: string\annoto{\{w\}} void,\ \Empty \\
\Empty; \ \n\proves passwd: string\annoto{\{p\}} void,\ \Empty \\
\Empty; \ \n\proves use:  void,\ \Empty 
\end{array}
\]
This confirms that $use$ is safe.
On the other hand, there is no $\Ps$ such that
$\Empty; \ \n\proves bad1: void,\ \Ps$ or
$\Empty; \ \n\proves bad2: void,\ \Ps$.
Such $\Ps$ must satisfy $w\in\Ps$ for the application of $writepass$, owing to the rules for application and for $\enable{}{}$.  And $\Ps$ must satisfy $\A(user)\subseteq \Ps$ by the rule for signs.


Here is another example, inspired by ones in Skalka and Smith \cite{SkalkaS:ICFP00}.
Define the following standard expressions:
\begin{eqnarray*}
lp &=& \LAM{f}
           {\signs{\n}
                  {(\LAM{x}
                        {\signs{\n}
                               {(\enable{\p}{(f\ x)})}})}} \\
cp &=& \LAM{x}{\signs{\n}
                     {(\chk{\p}{x})}}
\end{eqnarray*}
The reader can verify that one analysis for $cp$ is given by the 
typing $\Delta;\;\n\proves cp:(\BOOL\annoto{\{\p\}}\BOOL), \Empty$
and that the typing demands $\p\in\A(\n)$. Similarly, the reader can verify 
that one possible analysis for $lp$ is given by the typing
$\Delta;\;\n\proves lp:(\BOOL\annoto{\{\p\}}\BOOL)\annoto{\Empty}
(\BOOL\annoto{\Empty}\BOOL), \Empty$.

For all $P\in\P(\PRIV)$, for all $\h:\Delta^*$, we can show (omitting types and
some steps),
\[
\begin{array}{lcl}
\means{lp}\n P\h 
&=& \lam{P_1}
        {\lam{d_1}
             {\lam{P_2}
                  {\lam{d_2}
                       {\means{\enable{\p}{f\;x}}\n(P_2\intersect\A(\n))
                                      [\h\mid f\mapsto d_1, x\mapsto d_2]}}}}\\
&=& \{\mathrm{letting}\;P_3=P_2\intersect\A(\n)\}\\
& & \lam{P_1}
        {\lam{d_1}
             {\lam{P_2}
                  {\lam{d_2}
                       {d_1(P_3\sqcup_{\n}\{\p\})d_2}}}}\\
\means{cp}\n P\h
&=&
\lam{P'_1}
    {\lam{d'_1}
         {\mifthenelse{\p\in(P'_1\intersect\A(\n))}
                      {\means{x}\n(P'_1\intersect\A(\n))\ext{\h}{x}{d'_1}}
                      {\star}}}\\
&=& \lam{P'_1}
    {\lam{d'_1}
         {\mifthenelse{\p\in(P'_1\intersect\A(\n))}{d'_1}{\star}}}
\end{array}
\]
Let $F=\means{lp}\n P\h$, let $d=\means{cp}\n P\h$, and
let $G=\means{(lp\;cp)}\n P\h$. Then
\[
\begin{array}{lcl}
\means{(lp\;cp)}\n P\h
&=&
FPd\\
&=&
\lam{P_2}
    {\lam{d_2}
         {\mifthenelse{\p\in((P_3\sqcup_{\n}\{\p\})\intersect\A(\n))}
                      {d_2}{\star}}} \\
&= &
\lam{P_2}{\lam{d_2}{d_2}}
 \quad \mathrm{because\;}\p\in\A(\n)
\\
\means{(lp\;cp)\TRUE}\n P\h
&=&
GP(\means{\TRUE}\n P\h)\\
&=&
\mathsf{true}
\end{array}
\]
Hence $(lp\;cp)\TRUE$ is safe in any environment and typable as 
$\Delta;\;\n\proves (lp\;cp)\TRUE: \BOOL, \Empty$.


\subsection{Safety of the analysis}
\label{sec:pos}

\begin{theorem}[Safety]
\label{thm:safe}
Suppose $\Empty;\;\n\proves e:\th,~\Ps$ is derivable.
Then for all $P\in\P(\PRIV)$ with $\Ps\subseteq P$, it is the case that
$\means{\Empty\proves e:\th^*}\n P \{\} \neq \star$.
\end{theorem}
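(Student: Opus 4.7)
The plan is to prove the theorem by strengthening it to an inductive statement covering open expressions, and using a logical-relations-style safety predicate indexed by annotated types. The extra strength is needed because, for function values, not raising $\star$ on a particular call is not enough: we need the return value (and all future calls) to also be safe, which the annotated arrow $\th_1\annoto{\Ps}\th_2$ is designed to track.

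First I would define a predicate $\Sim_\th$ on $\means{\th^*}_{\bot\star}$ by induction on $\th$: $\Sim_\BOOL(v) \iff v \neq \star$; and $\Sim_{\th_1\annoto{\Ps}\th_2}(v) \iff v\neq\star$ and (when $v$ is a function) for all $P'\supseteq\Ps$ and all $d\in\means{\th_1^*}$ with $\Sim_{\th_1}(d)$, we have $\Sim_{\th_2}(v\,P'\,d)$. Extend this pointwise to environments: $\h$ \emph{respects} $\Delta$ when $\Sim_{\Delta.x}(\h.x)$ for each $x$. Then I would record two easy structural lemmas: (i) \emph{subtyping}: $\th'\leq\th$ implies $\Sim_{\th'}(v) \Rightarrow \Sim_\th(v)$, proved by induction on $\th$ using contravariance on the domain; (ii) \emph{admissibility}: for each $\th$, the set $\{v \mid \Sim_\th(v)\}$ contains $\bot$ (in the function case, the constant-$\bot$ function) and is closed under lubs of ascending chains, since $\star$ is an isolated maximal element of $C_{\bot\star}$ and the function-case predicate is a conjunction of conditions preserved by pointwise lubs.

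The main lemma, proved by induction on the derivation, states: if $\Delta;\n\proves e:\th,\Ps$ and $\h$ respects $\Delta$ and $\Ps\subseteq P$, then $\Sim_\th(\means{\Delta\proves e:\th^*}\n P \h)$. The constant, variable and abstraction cases are immediate; the if-then-else and test cases use that $\Ps_1\cup\Ps_2(\cup\Ps_3)\subseteq P$ gives each premise its required privilege set. The application case uses the subtyping lemma to pass the $\Sim_{\th'_1}$ value of $e_2$ as a $\Sim_{\th_1}$ argument to $e_1$'s value, together with $\Ps\subseteq P$ coming from the union in the rule. For $\chk{\p}{e}$, the premise $\Ps\cup\{\p\}\subseteq P$ guarantees the check does not fire; for $\enable{\p}{e}$, adding $\p$ on both sides preserves the containment (whether or not $\p\in\A(\n)$); for $\signs{\n'}{e}$, the side condition $\Ps\subseteq\A(\n')$ combined with $\Ps\subseteq P$ gives $\Ps\subseteq P\cap\A(\n')$, which is precisely what the IH needs under the re-signed semantics. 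The theorem then follows by instantiating with $\Delta=\Empty$ and $\h=\{\}$ and noting $\Sim_\th(v)\Rightarrow v\neq\star$.

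The main obstacle will be the $\letdecl{\recdecl{f}{x}{e_1}}{e_2}$ case: one must show $\Sim_{\th_1\annoto{\Ps}\th_2}(\fix\,G)$. The plan is to argue by Kleene-style fixed-point induction — admissibility of $\Sim$ (from step above) and the fact that $G$ preserves $\Sim$: given $g$ with $\Sim_{\th_1\annoto{\Ps}\th_2}(g)$, any $P'\supseteq\Ps$ and any $d$ with $\Sim_{\th_1}(d)$ produce an extended environment $\recext{\h}{f}{g}{x}{d}$ respecting $\Delta, f{:}\th_1\annoto{\Ps}\th_2, x{:}\th_1$, whence the IH on $e_1$ yields $\Sim_{\th_2}$ of $G(g)\,P'\,d$. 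Then $\h$ extended by $\fix\,G$ respects $\Delta,f{:}\th_1\annoto{\Ps}\th_2$ and the IH on $e_2$ with $\Ps_1\subseteq\Ps\cup\Ps_1\subseteq P$ finishes the case.
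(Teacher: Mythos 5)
Your proposal is correct and follows essentially the same route as the paper: the predicate you call $\Sim_\th$ is the paper's $\safe~\th$, your two structural lemmas correspond to the paper's Fact~\ref{fact:safeMono} (monotonicity under subtyping) and Lemma~\ref{lem:adm} (preservation of lubs), and your main lemma with its case analysis—including the Kleene fixed-point induction for \texttt{let rec} and the use of $\Ps\subseteq\A(\n')$ to obtain $\Ps\subseteq P\cap\A(\n')$ in the \texttt{signs} case—is exactly Lemma~\ref{lem:safe}.
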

\begin{proof}
Immediate consequence of Lemma \ref{lem:safe} below.
\end{proof}

In order to serve as an adequate induction hypothesis, the lemma
strengthens the theorem by allowing judgements with non-empty
contexts.  But this is not enough.  Values at arrow types are
functions that depend on privilege sets. As induction hypothesis for
the case of application we require these functions be safe with
respect to the privilege set $\Ps$ annotating their type.

\begin{definition}
For each annotated type $\th$ the predicate $\safe~\th$ on 
$\means{\th^*}_{\bot\star}$ is defined as follows: \\
$\safe~ {\th}(\bot) \iff \mathsf{true}$ and 
$\safe~ {\th}(\star)\iff \mathsf{false}$
for all $\th$.  For values other than $\bot$ and $\star$, the definition is by
induction on structure of $\th$.
\[
\begin{array}{lcl}
\safe~ {\BOOL}(b) &\iff & \mathsf{true} \\
\safe~(\th_1\annoto\Ps\th_2)(f) &\iff &
\forall P\in\P(\PRIV) .\forall d\in\means{\th_1^*} .\; \\
&&  \Ps\subseteq P \land 
\safe~ {\th_1}(d)\implies \safe~ {\th_2}(f P d)
\end{array}
\]
The predicate  
$\safe~\Delta$ on $\means{\Delta^*}$ is defined by
$ \safe~\Delta(h) \iff \forall x\in\dom(\h) . \safe(\Delta.x)(\h.x) $.
Recall that $\h.x\neq\bot$ and $\h.x\neq\star$, because $\bot\not\in\means{\ty}$ and
$\star\not\in\means{\ty}$, for all $\ty$.
\end{definition}

\begin{factx}\label{fact:safeMono}
  $\th\leq \th'$ and $\safe~\th~d$ imply $\safe~\th'~d$.
\end{factx}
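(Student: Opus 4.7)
The plan is to prove Fact~\ref{fact:safeMono} by induction on the structure of $\th$ (equivalently, of $\th'$, since $\th \leq \th'$ forces them to have the same shape modulo annotations). The base cases where $d = \bot$ or $d = \star$ are handled separately and upfront: for $d = \bot$, $\safe~\th'(\bot)$ is true by definition; and for $d = \star$, the hypothesis $\safe~\th~\star$ is false, so the implication is vacuous.

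For the remaining cases I would case-split on $\th$. If $\th = \BOOL$, then inspection of the subtyping relation shows the only $\th'$ with $\th \leq \th'$ is $\BOOL$ itself, so the conclusion is immediate from the hypothesis. The interesting case is $\th = \th_1 \annoto{\Ps} \th_2$. Then $\th'$ must have the form $\th'_1 \annoto{\Ps'} \th'_2$ with $\th'_1 \leq \th_1$, $\th_2 \leq \th'_2$, and $\Ps \subseteq \Ps'$. To show $\safe~\th'(d)$, fix an arbitrary $P \in \P(\PRIV)$ with $\Ps' \subseteq P$ and $d' \in \means{\th_1'^*}$ with $\safe~\th'_1(d')$, and aim to prove $\safe~\th'_2(d\,P\,d')$. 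From $\th'_1 \leq \th_1$ and $\safe~\th'_1(d')$, the induction hypothesis (applied at the strictly smaller type $\th'_1$, which is a subterm of $\th'$ and matches $\th_1$ structurally via the erasure $^*$) gives $\safe~\th_1(d')$. Since $\Ps \subseteq \Ps' \subseteq P$, the hypothesis $\safe~\th(d)$ yields $\safe~\th_2(d\,P\,d')$. Finally, $\th_2 \leq \th'_2$ together with the induction hypothesis at $\th_2$ delivers $\safe~\th'_2(d\,P\,d')$, as required.

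The main conceptual point to get right is the direction of the induction in the contravariant position: the IH is applied to $\th'_1 \leq \th_1$, not to $\th_1 \leq \th'_1$. This is fine because the induction is on the structure of the (erased) types rather than on the subtyping derivation, and the contravariant and covariant subterms are both structurally smaller. I do not anticipate a genuine obstacle; the proof is a textbook ``subsumption preserves a logical predicate'' argument, and the monotonicity $\Ps \subseteq \Ps'$ in the annotations lines up precisely with the universal quantification over $P \supseteq \Ps'$ in the definition of $\safe$ at an arrow type.
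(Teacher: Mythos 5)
Your proof is correct and is essentially the paper's own argument: the paper inducts on the derivation of $\th\leq\th'$ rather than on the (erased) type structure, but since subtyping derivations here mirror the type structure exactly, the two inductions coincide, and your handling of the arrow case --- contravariant use of the IH on $\th'_1\leq\th_1$, the inclusion $\Ps\subseteq\Ps'\subseteq P$, then the covariant IH on $\th_2\leq\th'_2$ --- matches the paper step for step. No gaps.
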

\begin{proof}
By induction on derivation of $\th\leq\th'$.  The result is clear for
$\BOOL\leq\BOOL$. For $(\th_1\annoto{\Ps}\th_2)\leq(\th'_1\annoto{\Ps'}\th'_2)$,
assume $\safe~(\th_1\annoto{\Ps}\th_2)~f$. 
To show $\safe~(\th'_1\annoto{\Ps'}\th'_2)~f$, consider 
any $P\in\P(\PRIV)$, such that $\Ps'\subseteq P$, and any 
$d\in\means{{\th'_1}^*}$ with $\safe~\th'_1~d$. From the subtyping, we know
that $\Ps\subseteq\Ps'$, hence $\Ps\subseteq P$. Moreover, by induction
on derivation of $\th'_1\leq\th_1$, we obtain $\safe~\th'_1~d$ implies
$\safe~\th_1~d$. Hence from assumption $\safe~(\th_1\annoto{\Ps}\th_2)~f$, 
we obtain $\safe~\th_2(f P d)$ holds. Now by induction on derivation 
$\th_2\leq\th'_2$, we obtain $\safe~\th'_2(f P d)$.  
\end{proof}

\begin{lemma}
\label{lem:adm}
The predicate $\safe$ preserves lubs. That is, for any $\th$,
let $u:\nats\to\means{\th^*}_{\bot\star}$ be an ascending chain. Then,
$\forall i.\safe~\th~(u_i)$ implies $\safe~\th~(\Lub_{i}u_i)$.

\end{lemma}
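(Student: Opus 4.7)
My plan is to proceed by induction on the structure of the annotated type $\th$, after first analyzing the shape that any ``safe'' ascending chain must have in $\means{\th^*}_{\bot\star}$. Since $\safe~\th~\star$ is false and $\safe~\th~(u_i)$ holds for every $i$, no $u_i$ equals $\star$. So the chain lies in $\{\bot\}\cup\means{\th^*}$. Using the definition of the order on the lifted disjoint union, $\star$ is incomparable with elements of $\means{\th^*}$, and $\bot$ has only $\bot$ above it among chain members (any $u_j\neq\bot$ forces $u_i\neq\bot$ for all $i\geq j$). Hence either all $u_i=\bot$, in which case $\Lub_i u_i=\bot$ and $\safe~\th~\bot$ holds by definition; or there is a least $k$ with $u_k\in\means{\th^*}$, and the tail $(u_i)_{i\geq k}$ is an ascending chain in $\means{\th^*}$ with $\Lub_i u_i = \Lub_{i\geq k} u_i \in \means{\th^*}$.

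In the remaining case I induct on $\th$. For $\th=\BOOL$ the order on $\means{\BOOL}$ is equality, so the tail is constant, and $\safe~\BOOL$ is trivially true for any boolean. For $\th=\th_1\annoto{\Ps}\th_2$, let $f=\Lub_{i\geq k}u_i$ in $\means{\th^*}=\P(\PRIV)\to\means{\th_1^*}\to\means{\th_2^*}_{\bot\star}$, which by the pointwise order has lubs computed pointwise. Pick arbitrary $P$ with $\Ps\subseteq P$ and $d\in\means{\th_1^*}$ with $\safe~\th_1~d$; I must show $\safe~\th_2~(fPd)$. By pointwise lubs, $fPd=\Lub_{i\geq k}(u_i P d)$. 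Each $u_i$ (for $i\geq k$) is safe, so by definition of $\safe$ at arrow type applied with the given $P$ and $d$, $\safe~\th_2~(u_i P d)$. Monotonicity of application in the function cpo gives that $(u_i P d)_{i\geq k}$ is an ascending chain in $\means{\th_2^*}_{\bot\star}$. Applying the induction hypothesis at $\th_2$ yields $\safe~\th_2~(fPd)$, as required.

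The only mildly delicate step is the structural analysis of the chain, which hinges on the specific form of the lifted disjoint union ordering described just before Section~3.1: namely, that $\star$ is not comparable with elements of $\means{\th^*}$ and that $\bot$ sits strictly below everything else. Once that shape lemma is in hand, the induction is routine because the predicate $\safe$ at arrow type is defined precisely by a universal quantification over arguments, which commutes with the pointwise lub of functions. I do not expect any genuine obstacle; the main point of the lemma is to license use of $\safe$ in fixed-point arguments (notably in the $\letdecl{\recdecl{f}{x}{e_1}}{e_2}$ case of the safety theorem), which requires exactly that $\safe$ be chain-closed.
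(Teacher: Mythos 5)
Your proof is correct and follows essentially the same route as the paper's: structural induction on $\th$, with the arrow case handled by instantiating the definition of $\safe$ at each $u_i$ and using that lubs in the function space are pointwise before invoking the induction hypothesis at $\th_2$. Your preliminary analysis of the chain's shape (ruling out $\star$, splitting on whether the chain is constantly $\bot$ or eventually lands in $\means{\th^*}$) is a slightly more careful treatment of the lifted elements than the paper gives explicitly, but it does not change the argument.
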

\begin{proof}
By structural induction on $\th$. When $\th=\BOOL$, 
the assumption $\safe~\th~(u_i)$ implies $u_i\neq\star$ for each $i$, 
so $\Lub_{i}u_i$ is 
$\mathsf{true}$ or $\mathsf{false}$ or $\bot$.  Thus the result holds by definition 
$\safe$. \\
When $\th=(\th_1\annoto{\Ps}\th_2)$, assume $P\in\P(\PRIV)$ and 
$d\in\means{\th_1^*}$, such that $\Ps\subseteq P$ and 
$\safe~\th_{1}(d)$. Then, from assumption $\safe~(\th_1\annoto{\Ps}\th_2)~u_i$
we obtain $\safe~\th_2~(u_{i} P d)$ holds for every $i$. By the induction 
hypothesis on $\th_2$, we get $\safe~\th_2~(\Lub_{i}(u_{i} P d))$. 
Lubs are pointwise, so we get $\safe~\th_2~((\Lub_{i}u_{i}) P d)$.
\end{proof}
\begin{lemma}
\label{lem:safe}
Suppose $\Delta;\;\n\proves e:\th,~\Ps$ is derivable.
Then for all $P\in\P(\PRIV)$, for all $\h\in\means{\Delta^*}$, if $\safe~ {\Delta}(\h)$
and $\Ps\subseteq P$ then  
$\safe~\th~(\means{\Delta^*\proves e:\th^*}\n P \h) $.
\end{lemma}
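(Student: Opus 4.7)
The plan is to prove the lemma by induction on the derivation of $\Delta;\n \proves e:\th,\Ps$, following the cases of Figure~\ref{fig:sa} and unfolding the corresponding clause of the semantics in Figure~\ref{fig:ds}. The two base cases are immediate: for $\TRUE$ we have $\safe~\BOOL~\mathsf{true}$; for a variable $x$, the denotation is $\h.x$, and $\safe~\Delta(\h)$ gives $\safe(\Delta.x)(\h.x)$ directly.

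For $\LAM{x}{e}$, I unfold to obtain a function and, given any $P'\supseteq\Empty$ and any $d$ with $\safe~\th_1~d$, extend $\h$ by $x\mapsto d$ (still safe at $\Delta,x\!:\!\th_1$) and apply the inner IH. For application $e_1\,e_2$, the IH gives $\safe(\th_1\annoto{\Ps}\th_2)(f)$ and $\safe~\th'_1(d)$ where $f,d$ are the denotations of $e_1,e_2$; using $\th'_1\leq\th_1$ and Fact~\ref{fact:safeMono} I promote $d$ to $\safe~\th_1(d)$, then use $\Ps\subseteq P$ (which holds since $\Ps\cup\Ps_1\cup\Ps_2\subseteq P$) to conclude $\safe~\th_2(fPd)$. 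The conditional and $\TEST$ cases are $\star$-strict in the relevant subexpression and split on the resulting Boolean/privilege condition, applying the IH on whichever branch is taken, with $\bot$ and $\star$ handled by definition of $\safe$. For $\chk{\p}{e}$, the premise $\Ps\cup\{\p\}\subseteq P$ gives $\p\in P$, so the semantics does not return $\star$ and the IH on $e$ applies. For $\enable{\p}{e}$ the denotation is evaluated at $P\sqcup_\n\{\p\}$, and monotonicity of $\sqcup_\n$ together with $\Ps\subseteq P$ gives $\Ps\sqcup_\n\{\p\}\subseteq P\sqcup_\n\{\p\}$, enabling the IH. For $\signs{\n'}{e}$, the premise $\Ps\subseteq\A(\n')$ combined with $\Ps\subseteq P$ gives $\Ps\subseteq P\cap\A(\n')$, which is exactly the hypothesis the IH needs at the new principal $\n'$.

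The main obstacle is the $\LETREC$ case, where the denotation uses $\fix\,G$. I handle it by Kleene fixed-point induction, using Lemma~\ref{lem:adm} (admissibility of $\safe$) in a crucial way. Let $g_0$ be the bottom element $\lam{P'}{\lam{d}{\bot}}$ of $\means{(\th_1\annoto{\Ps}\th_2)^*}$, which is safe at $\th_1\annoto{\Ps}\th_2$ since $\safe~\th_2(\bot)$ holds. Define the chain $g_{i+1}=G(g_i)$. I show $\safe(\th_1\annoto{\Ps}\th_2)(g_i)$ by induction on $i$: at each step I apply the outer IH to the body $e_1$ under the environment $\recext{\h}{f}{g_i}{x}{d}$, which is safe at $\Delta,f\!:\!\th_1\annoto{\Ps}\th_2,x\!:\!\th_1$ by the inductive hypothesis on $g_i$ and the assumptions on $\h$ and $d$. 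The derivation requires exactly $\Ps\subseteq P'$, matching the shape of the rule. By Lemma~\ref{lem:adm} the lub $\fix\,G=\Lub_i g_i$ is safe at $\th_1\annoto{\Ps}\th_2$, so extending $\h$ by $f\mapsto\fix\,G$ preserves safety of the environment, and the outer IH applied to $e_2$ finishes the case.

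The overall structure is a routine structural induction once these ingredients are in place: Fact~\ref{fact:safeMono} for the subtyping step in application, the premise $\Ps\subseteq\A(\n')$ for $\SIGNS$, monotonicity of $\sqcup_\n$ for $\ENABLE$, and admissibility (Lemma~\ref{lem:adm}) for $\LETREC$. The theorem then follows by specialising to $\Delta=\Empty$ and $\h=\{\}$, where $\safe~\Empty(\{\})$ holds vacuously and $\safe~\th(d)$ for $d\ne\bot$ implies $d\ne\star$.
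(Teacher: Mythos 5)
Your proof follows essentially the same route as the paper's: induction on the typing derivation, Fact~\ref{fact:safeMono} to promote the argument across the subtyping step in application, and Kleene iteration together with the admissibility Lemma~\ref{lem:adm} for the $\LETREC$ case. One small slip to fix in the write-up: in the $\LAM{x}{e}$ case you quantify over ``any $P'\supseteq\Empty$,'' but to invoke the induction hypothesis on the body (whose judgement carries the latent set $\Ps$) you must assume $\Ps\subseteq P'$ --- which is exactly what the definition of $\safe$ at type $\th_1\annoto{\Ps}\th_2$ supplies, so the argument goes through once stated that way.
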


\begin{sloppypar}
Theorem~\ref{thm:safe} follows from the lemma because 
$\safe~ {\Empty} \{\}$ and  
$\safe~ {\th} (\means{\Empty\proves e:\th^*}\n P \{\}) $ 
implies 
$\means{\Empty\proves e:\th^*}\n P \{\} \neq \star$.
\end{sloppypar}

Another consequence of the lemma is that the language admits
additional constants at all types, declared in an initial context
$D_0$, provided the corresponding initial environment assigns a safe
meaning to each identifier in $D_0$.

\begin{xproof} of Lemma.
Go by induction on the typing derivation, $\Delta;\;\n\proves e:\th,~\Ps$.
Throughout, we assume $P\in\P(\PRIV)$ and $\h\in\means{\Delta^*}$ and $\safe\
{\Delta}(\h )$ and  $\Ps\subseteq P$, and also let
$u=\means{\Delta^*\proves e:\th^*}\n P \h$ for each case of $e$.
\begin{itemize}
\item Case $\TRUE$:
Then, $u = \mathsf{true}$ so $\safe~ {\BOOL}(u)$ by definition $\safe$.
\item Case $x$:
Then, $u=\h.x$ and $\safe~ {\th}(\h.x)$ follows, by definition $\safe$, 
from the assumption $\safe~ {\Delta}(\h )$.
\item Case $\ifthenelse{e}{e_1}{e_2}$:
Then $\Ps_1\union\Ps_2\union\Ps_3\subseteq P$, and 
\[ u=\mifthenelse{b}{\means{\Delta^*\proves e_1:\th^*}\n P \h}
                  {\means{\Delta^*\proves e_2:\th^*}\n P \h}\]
 where
$b=\means{\Delta^*\proves e:\BOOL}\n P \h$. By the induction hypothesis
on the typing derivation of $e$, noting that $\Ps_1\subseteq P$, we have $\safe\
\BOOL(b)$ and hence $b\neq\star$. 
If $b=\bot$ then $u=\bot$ and $\bot$ is safe.
Otherwise, $b=\mathsf{true}$ or $b=\mathsf{false}$. In the
former case, by the induction hypothesis on the typing derivation of 
$e_1$, noting that $\Ps_2\subseteq P$, we have 
$\safe~ {\th}(u)$. The case of $b=\mathsf{false}$ is symmetric.
\item Case $\LAM{x}{e}$:
Then 
$u=\lam{P'}
       {\lam{d}
            {\means{\Delta^*,x:\th_{1}^*\proves e:\th_{2}^*}\n P'\ext{\h}{x}{d}}}$.
Thus $u\neq\star$. To prove $\safe~(\th_{1}\annoto{\Ps}\th_{2})(u)$, consider any 
$P''\in\P(\PRIV)$ and any $d'\in\means{\th_{1}^*}$ such that 
$\Ps\subseteq P''$ and $\safe~ {\th_{1}}(d')$, to show $\safe~\th_2(u P'' d')$.
By semantics, $u P'' d' = 
\means{\Delta^*,x:\th_{1}^*\proves e:\th_{2}^*}\n P''\ext{\h}{x}{d'}$, 
so the induction hypothesis for $e$ yields $\safe~\th_2(u P'' d')$ provided
that $\Ps\subseteq P''$ and $\safe~ {(\Delta,x:\th_{1})}\ext{\h}{x}{d'}$.
We have $\Ps\subseteq P''$ by assumption, and 
$\safe~ {(\Delta,x:\th_{1})}\ext{\h}{x}{d'}$ follows from $\safe~ {\Delta}(\h )$ and 
$\safe~ {\th_{1}}(d')$.
\item Case $e_1\;e_2$:
Let $f=\means{\Delta^*\proves e_1:\th_{1}^*\to\th_{2}^*}\n P \h$ and 
$d=\means{\Delta^*\proves e_2:{\th'_1}^*}\n P \h$, so that $u=f P d$.  
(Recall that $\th'_1\leq\th_1$ implies ${\th'_1}^*=\th_{1}^*$ so the
application $f P d$ makes sense.)  From safety of $\h$ and the assumption
$\Ps\union\Ps_1\union\Ps_2\subseteq P$,  
we get by induction on $e_1$ that $\safe~(\th_{1}\annoto{\Ps}\th_{2})(f)$, 
and we get $\safe~ {\th'_1}(d)$ by induction on $e_2$. 
By $\th'_1\leq \th_1$ and Fact~\ref{fact:safeMono} we have $\safe~
{\th'_1}(d)\implies \safe~ {\th_1}(d)$. 
Then by definition $\safe~(\th_{1}\annoto{\Ps}\th_{2})(f)$ we get
$\safe~ {\th_2}(f P d)$.
\item Case $\letdecl{\recdecl{f}{x}{e_1}}{e_2}$:
Then, $\Ps\union\Ps_1\subseteq P$.\\
Now $u=\means{\Delta^*,f:\th_1^*\to\th_2^*\proves e_2:\th^*}\n P 
\ext{\h}{f}{\fix\;G}$, where
\[ 
G(g)=\lam{P'}
         {\lam{d}
              {\means{\Delta^*,f:\th_1^*\to\th_2^*,x:\th_1^*
               \proves e_1:\th_2^*}\n P'\recext{\h}{f}{g}{x}{d}}}
\]
To get $\safe~ {\th}(u)$ by induction for $e_2$, we need 
$\Ps_1\subseteq P$ and 
\[
\safe~ {(\Delta,f:\th_1\annoto{\Ps}\th_2)}\ext{\h}{f}{\fix\;G} \] 
The former follows from the assumption $\Ps\union\Ps_1\subseteq P$.
The latter follows from assumption, $\safe~ {\Delta}(\h )$, and 
$\safe~(\th_1\annoto{\Ps}\th_2)(\fix\;G)$.
We proceed to show safety of $\fix\;G$.  

Now $\fix\;G=\Lub_{i}g_i$, where $g_0=\lam{P''}{\lam{d\in\means{\th_1^*}}{\bot}}$ and
$g_{i+1}=G(g_i)$. 
And, $\safe~(\th_1\annoto{\Ps}\th_2)(\fix\;G)$ is a consequence of the 
following claim:
\begin{eqnarray}
  \label{eq:one}
& \forall i.\; \safe~(\th_1\annoto{\Ps}\th_2)(g_i) 
\end{eqnarray}
Then from Lemma~\ref{lem:adm}, we get 
$\safe~(\th_1\annoto{\Ps}\th_2)(\Lub_{i}g_i)$. 
It remains to show (\ref{eq:one}), for which we proceed by induction on $i$.\\
Base case: Show $\safe~(\th_1\annoto{\Ps}\th_2)(g_0)$. Assume any 
$P''\in\P(\PRIV)$ and any $v\in\means{\th_1^*}$, such that 
$\Ps\subseteq P''$ and $\safe~ {\th_1}(v)$. Then $g_{0}P''v=\bot\neq\star$ 
and $\safe~ {\th_2}(g_{0}P''v)$ holds. \\
Induction step: Assume $\safe~(\th_1\annoto{\Ps}\th_2)(g_i)$, to show
$\safe~(\th_1\annoto{\Ps}\th_2)(g_{i+1})$. \\
Now $g_{i+1}=G(g_i)=
\lam{P'}{\lam{d}{\means{\Delta^*,f:\th_1^*\to\th_2^*,x:\th_1^*
\proves e_1:\th_2^*}\n P \recext{\h}{f}{g_i}{x}{d}}}$. Assume any
$P''\in\P(\PRIV)$ and $v\in\means{\th_1^*}$, such that $\Ps\subseteq P''$
and $\safe~ {\th_1}(v)$.  
Then 
\[
g_{i+1}P''(v)=\means{\Delta^*,f:\th_1^*\to\th_2^*,x:\th_1^*
\proves e_1:\th_2^*}\n P'' \recext{\h}{f}{g_i}{x}{v}
\]
Note that
$\safe~(\Delta,f:\th_1\annoto{\Ps}\th_2,x:\th_1)\recext{\h}{f}{g_i}{x}{v}$.
Therefore, by the main induction hypothesis on the typing derivation 
$\Delta, f:\th_1\annoto{\Ps}\th_2, x:\th_1;\;\n
      \proves e_1:\th_2, ~\Ps$, since $\Ps\subseteq P$, we obtain
$\safe~ {\th_2}(g_{i+1}P''v)$.
%
%
\item Case $\signs{\n'}{e}$:
Then $\Ps\subseteq P$ and 
$u=\means{\Delta^*\proves e:\th^*}\n'(P\intersect\A(\n')) \h$. 
The induction hypothesis on the typing derivation
of $e$ can be used to obtain $\safe~ {\th}(u)$, 
because $\Ps\subseteq(P\intersect\A(\n'))$ which follows from 
assumption $\Ps\subseteq P$ and side condition $\Ps\subseteq \A(\n')$
on the antecedent $\Delta;\;\n'\proves e:\th, ~\Ps$ of 
$\Delta;\;\n\proves \signs{\n'}{e}:\th,~\Ps$.
\item Case $\enable{\p}{e}$:
Then $\Ps\subseteq P$ and 
$u=\means{\Delta^*\proves e:\th^*}\n(P\sqcup_{\n}\{\p\}) \h$.
By the induction hypothesis for $e$, noting that 
$(\Ps\sqcup_{\n}\{\p\})\subseteq(P\sqcup_{\n}\{\p\})$, we have $\safe~ {\th}(u)$.
\item Case $\chk{p}{e}$: 
\begin{sloppypar}
Then $\Ps\union\{\p\}\subseteq P$, hence $\p\in P$.
Now \( u=\mifthenelse{\p\in P}{\means{\Delta^*\proves e:\th^*}\n P \h}{\star} \).
Since $\p\in P$, we have,
$u=\means{\Delta^*\proves e:\th^*}\n P \h$ and, by the induction
hypothesis on the typing derivation of $e$, we have $\safe~ {\th}(u)$.
\end{sloppypar}
\item Case $\test{\p}{e_1}{e_2}$:
Then $\Ps_1\union\Ps_2\subseteq P$ and 
\[ u=\mifthenelse{\p\in P}{\means{\Delta^*\proves e_1:\th^*}\n P \h}
               {\means{\Delta^*\proves e_2:\th^*}\n P \h} \]
We have two cases. Suppose $\p\in P$. Then,
by induction hypothesis on typing derivation of $e_1$ and noting that
$\Ps_1\subseteq P$, we have $u=\means{\Delta^*\proves e_1:\th^*}\n P \h$ and $\safe~
{\th}(u)$. The case where $\p\not\in\P$, is symmetric. 
\blackslug
\end{itemize}
\end{xproof}

\section{Some program transformations}
\label{sec:tran}
Using the eager semantics it is straightforward to justify program transformations
that can be used for optimization.  
This section shows how checks can be eliminated from the password example 
and also considers the proofs of some primitive equations from Fournet and Gordon's work~\cite{FournetG03}.

\subsection{Transformations that eliminate checks}\label{sec:elimcheck}

First, we list a series of program transformations that move checking of 
privileges ``outwards''.
\[
\begin{array}{lcl}
\ifthenelse{e}{\chk{\p}{e_1}}{\chk{\p}{e_2}}
&=&
\chk{\p}{\ifthenelse{e}{e_1}{e_2}}\\
e_1(\chk{\p}{e_2}) 
&=&
\chk{\p}{e_1e_2}\\
\test{\p}{e_1}{\chk{\p}{e_2}}
&=&
\chk{\p}{\test{\p}{e_1}{e_2}}\\
\test{\p'}{\chk{\p}{e_1}}{\chk{\p}{e_2}}
&=&
\chk{\p}{\test{\p'}{e_1}{e_2}}\\
\letdecl{\recdecl{f}{x}{e_1}}{\chk{\p}{e_2}}
&=&
\chk{\p}{\letdecl{\recdecl{f}{x}{e_1}}{e_2}}\\
\chk{\p}{\chk{\p}{e}}
&=&
\chk{\p}{e} 
\end{array}
\]
These are unconditional equalities, as the reader can verify using the denotational
semantics (Figure~\ref{fig:ds}).
We emphasize that this means extensional equality of the functions denoted by the two sides.  
In particular, $e=e'$ means 
$\means{D\proves e:\ty}\n P \h =
 \means{D\proves e':\ty}\n P \h$ for all $\n,P,\h$.
Corresponding to Fournet and Gordon one may consider a slightly weaker notion of equality where $P$ ranges over subsets of $\A(\n)$, as indicated by (\ref{eq:dependDenot}) in Section~\ref{sec:mexpr}.
Later we encounter one transformation that only holds for that weaker equality.

Once checks have been moved outward, some can be eliminated.
To eliminate a check, it must be known definitely to succeed, \eg, because it has been
enabled for an authorized principal.  We give an example transformation of this
kind in Theorem~\ref{thm:elim}, formulated in terms of the following notions
concerning expressions that do not depend on privilege $\p$.

\begin{definition}[p-purity]
\begin{sloppypar}
An expression $e$ is $\p$-pure if $e$ has no sub-expressions of the form
$\chk{\p}{e'}$ or $\test{\p}{e'}{e''}$.  
\end{sloppypar}

For each type $\ty$ we define semantic $\p$-purity as a predicate 
$\pure~\p~{\ty}$ on $\means{\ty}_{\bot\star}$,  as follows: \\
$\pure~\p~{\ty}(\bot) \iff \mathsf{true}$ and 
$\pure~\p~{\ty}(\star) \iff \mathsf{true}$
for all $\ty$.  For values other than $\bot$ and $\star$, the definition is by
induction on structure of $\ty$.
\[
\begin{array}{lcl}
\pure~\p~ {\BOOL}(b) &\iff & \mathsf{true} \\
\pure~\p~(\ty_1\to \ty_2)(f) &\iff &
\forall P\in\P(\PRIV) .\forall d\in\means{\ty_1} .\; \\
&& \pure~\p~ {\ty_1}(d)\implies \pure~\p~ {\ty_2}(f P d)
\land f P d = f (P-\{\p\}) d 
\end{array}
\]
Finally, for environment $h\in\means{D}$ we define
$\pure~\p~D(\h)$ iff $\pure~\p~\ty(\h.x)$ for all $x:\ty$ in $D$.
\end{definition}

\begin{lemma}
\label{lem:purlub}
Suppose  $u:\nats\to \means{\ty_1\to\ty_2}$ is an ascending chain.
Then $\forall i. \pure~\p~(\ty_1\to \ty_2)(u_i)$ implies 
$\pure~\p~(\ty_1\to \ty_2)(\lub_i u_i)$.
\end{lemma}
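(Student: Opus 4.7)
The plan is to mirror the structure of Lemma~\ref{lem:adm}: proceed by structural induction on the type, treating the stated function-type case as the inductive step, and establishing the analogous statement at $\BOOL$ (and more generally at every type $\ty$) as what is really needed for the induction to close.

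At $\BOOL$ the predicate $\pure~\p~\BOOL$ holds of every element of $\means{\BOOL}_{\bot\star}$, so the result is immediate for any chain, however it may ascend.

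For $\ty_1\to\ty_2$, let $u:\nats\to\means{\ty_1\to\ty_2}$ be an ascending chain with each $u_i$ pure. To conclude $\pure~\p~(\ty_1\to\ty_2)(\Lub_i u_i)$, fix any $P\in\P(\PRIV)$ and any pure $d\in\means{\ty_1}$, and verify the two clauses of the definition: (a) $\pure~\p~\ty_2((\Lub_i u_i)\,P\,d)$, and (b) $(\Lub_i u_i)\,P\,d = (\Lub_i u_i)\,(P-\{\p\})\,d$. Because lubs in $\means{\ty_1\to\ty_2}$ are computed pointwise, both sides of (b) equal $\Lub_i(u_i\,P\,d)$ and $\Lub_i(u_i\,(P-\{\p\})\,d)$ respectively; purity of each $u_i$ gives $u_i\,P\,d = u_i\,(P-\{\p\})\,d$ term-by-term, so the two lubs coincide. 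For (a), the same pointwise observation reduces the goal to purity of the lub of the chain $(u_i\,P\,d)_{i\in\nats}$ in $\means{\ty_2}_{\bot\star}$; each $u_i\,P\,d$ is $\p$-pure by purity of $u_i$ together with purity of $d$, and purity of the lub then follows from the induction hypothesis applied at $\ty_2$.

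The only subtlety is that the statement as written mentions only arrow types, whereas the induction naturally wants to invoke itself at the codomain $\ty_2$, which could be $\BOOL$. The clean resolution is to observe that purity at $\BOOL$ holds trivially for all elements and hence for every chain, so no separate lemma is required; equivalently, one can read Lemma~\ref{lem:purlub} as the nontrivial instance of a structural induction over $\ty$ in which the base case is vacuous. I expect no serious obstacle beyond this observation and the appeal to pointwise computation of lubs in the continuous function space.
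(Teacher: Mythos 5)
Your proof is correct and is essentially the fully spelled-out version of the paper's own argument, which the paper compresses into a single line (``by definition of pure and since joins are given pointwise''): the pointwise computation of lubs handles the equation $f\,P\,d = f\,(P-\{\p\})\,d$, and the implicit structural induction on $\ty_2$ (trivial at $\BOOL$) handles purity of $(\Lub_i u_i)\,P\,d$. No substantive difference in approach.
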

\begin{proof}
By definition of pure and since joins are given pointwise.  
\end{proof}

\begin{lemma}
\label{lem:pur}
  If $e$ is $\p$-pure and typable as $\D\proves e:\ty$, then for all $\n,P,\h$
with $\pure~\p~D(\h)$ we have
\[ \means{ D\proves e:\ty} \n P \h =
   \means{ D\proves e:\ty} \n (P-\{\p\}) \h 
\]
and $\pure~\p~\ty(\means{ D\proves e:\ty} \n P \h)$.
\end{lemma}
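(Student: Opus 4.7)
The plan is to proceed by structural induction on the derivation of $\D\proves e:\ty$, using the syntactic $\p$-purity of $e$ to rule out the only two constructs that inspect the bit $\p$ of the current privilege set, namely $\chk{\p}{-}$ and $\test{\p}{-}{-}$. For each case I will need to establish two conjuncts simultaneously: the equation $\means{e}\n P \h = \means{e}\n(P-\{\p\})\h$, and semantic $\p$-purity of the resulting value. Both generalize together, so the double IH is essential.

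Most cases are routine unfolding of the denotational clauses in Figure~\ref{fig:ds}. The constant $\TRUE$ and variable cases are immediate from the definitions; conditionals, application, and the binders use the IH on subterms together with the observation that $\mletml{d}{-}{-}$ preserves $\p$-purity since $\bot$ and $\star$ are $\p$-pure by definition. For $\LAM{x}{e'}$, both sides of the equation denote the same function of $P'$ and $d$ because the outer $P$ is discarded, and $\p$-purity of that function reduces by unfolding to the IH on $e'$, with environment extension preserving $\pure~\p~(\Delta,x:\ty_1)$. For $\signs{\n'}{e'}$, I will use the set-theoretic identity $(P-\{\p\})\cap \A(\n') = (P\cap \A(\n'))-\{\p\}$ and then apply the IH.

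The two subtle cases are $\enable{\p'}{e'}$ and the let-rec. For $\ENABLE$, note that $\p$-purity of $e$ does \emph{not} forbid $\p' = \p$, so I must handle that case. When $\p'\neq\p$, the operator $-\{\p\}$ commutes with $\sqcup_{\n}\{\p'\}$, and the IH on $e'$ finishes the equation. When $\p'=\p$, the two sides are $\means{e'}\n(P\sqcup_{\n}\{\p\})\h$ and $\means{e'}\n((P-\{\p\})\sqcup_{\n}\{\p\})\h$; in either subcase ($\p\in\A(\n)$ or not), applying the IH to strip $\{\p\}$ from the privilege argument collapses both to $\means{e'}\n(P-\{\p\})\h$. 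Purity of the result is then inherited from the IH. For $\letdecl{\recdecl{f}{x}{e_1}}{e_2}$, I will imitate the structure of the proof of Lemma~\ref{lem:safe}: show by an inner induction on $i$ that each approximant $g_i$ of $\fix\,G$ is $\p$-pure and satisfies the equation, with base case $g_0 = \lam{P''}{\lam{d}{\bot}}$ (trivial since $\bot$ is $\p$-pure), and step case handled by the outer IH applied to $e_1$ under the extended environment $\recext{\h}{f}{g_i}{x}{d}$, which is $\p$-pure by the inner IH. Then Lemma~\ref{lem:purlub} lifts $\p$-purity to $\fix\,G = \Lub_i g_i$, and pointwise equality of the chain yields equality of the limits; finally the outer IH on $e_2$ closes the case.

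Finally, the cases $\chk{\p'}{e'}$ and $\test{\p'}{e_1}{e_2}$ force $\p'\neq\p$ by syntactic $\p$-purity, so membership of $\p'$ in $P$ agrees with membership in $P-\{\p\}$, and the IH dispatches both the equation and the purity conjunct; the possible result $\star$ is $\p$-pure by definition. The main obstacle I expect is simply bookkeeping in the $\ENABLE$ case (because $\p$-purity is defined to permit enabling of $\p$) and writing the let-rec case cleanly; no deeper difficulty arises, since compositionality of the denotational semantics reduces everything to local calculations.
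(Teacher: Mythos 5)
Your proposal is correct and follows essentially the same route as the paper's proof: induction on the typing derivation carrying both conjuncts, the set identity $(P-\{\p\})\cap\A(\n')=(P\cap\A(\n'))-\{\p\}$ for $\SIGNS$, the subcase analysis on $\p'=\p$ and $\p\in\A(\n)$ for $\ENABLE$, syntactic purity forcing $\p'\neq\p$ in $\CHK$ and $\TEST$, and Lemma~\ref{lem:purlub} for the fixed point. Your treatment of the recursion case is in fact more explicit than the paper's one-line appeal to the lemma, but it is the same argument.
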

\begin{xproof}
By induction on $e$.  We observe for any $\D,\n,P,\h$ with 
$\pure~\p~\D~(\h)$
\begin{itemize}
\item Case $\TRUE$: The equation is direct from the
 semantics, which is independent of $P$.  For $\p$-purity of $\mathsf{true}$,
the result holds by definition of $\pure~\p~\BOOL$.
\item Case $x$: the equation is direct from the semantics which is independent 
of  $P$.  For $\p$-purity of $\means{D\proves x:t}\n P\h$,
the result holds by hypothesis on $\h$.
\item Case $\ifthenelse{e_1}{e_2}{e_3}$: straightforward use of induction.
\item Case $\LAM{x}{e}$: The equation holds because the semantics is independent of
  $P$.  Purity holds by induction on $e$.  
\item Case $e_1 e_2$: To show the equation, we use that $\means{D\proves e_1}$ is
  $\p$-pure, which holds by induction.  To show purity, we again use purity of
  $e_1$ as well as purity of $e_2$. 
\item Case $\letdecl{\recdecl{f}{x}{e_1}}{e_2}$:
By induction on $e_2$, using Lemma~\ref{lem:purlub}.
\item Case $\signs{\n'}{e}$:
The equation is direct from semantics, using the fact that 
$(P\intersect \A(\n'))-\{\p\} = (P-\{\p\})\intersect \A(\n')$.
\item Case $\enable{\p'}{e}$:  
We first consider the case where $\p'$ is distinct from $\p$.  We have 
\[\begin{array}{cll}
 & \means{\D\proves\enable{\p'}{e}:\ty}\n P \h \\
= & \means{\D\proves e:\ty}\n (P\sqcup_{\n} \{\p'\} ) \h& \mbox{semantics}\\
= & \means{\D\proves e:\ty}\n ((P\sqcup_{\n} \{\p'\} )-\{\p\}) \h & \mbox{induction hyp.}\\
= & \means{\D\proves e:\ty}\n ((P-\{\p\}\sqcup_{\n} \{\p'\} ) \h & \p,\p'\mbox{distinct}\\
= & \means{\D\proves\enable{\p'}{e}:\ty}\n (P-\{\p\}) \h& \mbox{semantics}
\end{array} \]
In case $\p'$ is $\p$ we have
\[\begin{array}{cll}
 & \means{\D\proves\enable{\p}{e}:\ty}\n P \h\\
= & \means{\D\proves e:\ty}\n (P\sqcup_{\n} \{\p\} ) \h& \mbox{semantics}\\
= & \means{\D\proves e:\ty}\n ((P-\{\p\}) \sqcup_{\n} \{\p\} ) \h&\mbox{see below}\\
= & \means{\D\proves\enable{\p}{e}:\ty}\n (P-\{\p\}) \h& \mbox{semantics}
\end{array} \]
The middle step is by cases on whether $\p\in\A(\n)$.  If it is, the
step holds by simply by definition of $\sqcup_{\n}$.  If not, the step
holds by induction on $e$.
\item Case $\chk{\p'}{e}$: Here $\p'$ is distinct from $\p$, by $\p$-purity.
We observe
\[\begin{array}{cll}
 & \means{\D\proves \chk{\p'}{e}:\ty}\n P \h\\
= & \mifthenelse{\p'\in P}{ \means{\D\proves e:\ty}\n P\h}{\star} &\mbox{  semantics}\\
= & \mifthenelse{\p'\in P-\{\p\}}{ \means{\D\proves e:\ty}\n (P-\{\p\}) \h}{\star} 
     &\mbox{  $\p',\p$ distinct, ind. for $e$}\\ 
= & \means{\D\proves \chk{\p'}{e}:\ty}\n (P-\{\p\}) \h
\end{array}
\]
\item Case $\test{\p'}{e_1}{e_2}$: Again, $\p'$ is distinct from $\p$,
and the argument is similar to $\CHK$.  
\end{itemize}
\end{xproof} 

\begin{theorem}
\label{thm:elim}
For all $\n$, all $\p\in\A(\n)$, and all  $\p$-pure closed terms $e$
\[
\begin{array}{lcl}
\signs{\n}{ \enable{\p}{ \chk{\p}{e}}} 
& = & \signs{\n}{ e } 
\end{array}
\]  
\end{theorem}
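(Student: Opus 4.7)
The plan is to unfold both sides of the proposed equation using the denotational semantics of Figure~\ref{fig:ds}, reduce them to applications of $\means{e}$ at appropriately chosen privilege sets, and then close the gap via Lemma~\ref{lem:pur}. Since both sides are closed terms, the only environment in play is $\{\}$, and extensional equality means checking that the two sides agree when applied to every principal $\n'$ and every privilege set $P$.

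First I would fix $\n'$ and $P$, and let $P' = P\intersect\A(\n)$. On the left, three semantic unfoldings in succession give
\[
\means{\signs{\n}{\enable{\p}{\chk{\p}{e}}}}\n' P \{\}
\;=\; \means{\chk{\p}{e}}\n (P'\sqcup_{\n}\{\p\}) \{\}.
\]
Since $\p\in\A(\n)$ by hypothesis, $P'\sqcup_{\n}\{\p\} = P'\union\{\p\}$, which contains $\p$, so the $\CHK$ does not trigger $\star$ and the expression further reduces to $\means{e}\n(P'\union\{\p\})\{\}$. On the right, a single unfolding gives $\means{\signs{\n}{e}}\n' P \{\} = \means{e}\n P' \{\}$. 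So it suffices to show $\means{e}\n(P'\union\{\p\})\{\} = \means{e}\n P' \{\}$.

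This is exactly what Lemma~\ref{lem:pur} provides: the empty environment is vacuously $\p$-pure, so applying the lemma to each side shows they both agree with $\means{e}\n(P'-\{\p\})\{\}$, hence with each other. The hypothesis $e$ is $\p$-pure (and the typing required by the lemma comes from well-typedness of the terms in the equation) are all available.

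The only subtlety is checking the step $P'\sqcup_{\n}\{\p\} = P'\union\{\p\}$, which relies crucially on $\p\in\A(\n)$; without that hypothesis the $\ENABLE$ would silently do nothing and the $\CHK$ would produce $\star$ for sets $P$ with $\p\notin P'$, so both pieces of the hypothesis on the theorem (that $\p\in\A(\n)$ and that $e$ is $\p$-pure) are genuinely used. I do not expect any real obstacle: everything reduces to bookkeeping with the semantic clauses together with one application of Lemma~\ref{lem:pur}.
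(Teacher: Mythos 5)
Your proposal is correct and follows essentially the same route as the paper's proof: unfold the three semantic clauses, use $\p\in\A(\n)$ to turn $\sqcup_{\n}$ into $\union$ and discharge the check, and then apply Lemma~\ref{lem:pur} twice (once to each of $P'\union\{\p\}$ and $P'$, meeting at $P'-\{\p\}$) exactly as the paper's remark about ``two uses of the lemma'' indicates. No further comment is needed.
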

\begin{proof}
Let $\h$ be the empty environment, for $e$ which is closed.
We observe for any $\n',P$:
\[\begin{array}{cll}
& \means{ \signs{\n}{ \enable{\p}{ \chk{\p}{e}}} }\n' P\h \\
=&\means{ \enable{\p}{ \chk{\p}{e}} }\n(\A(\n)\intersect P)\h & \mbox{ semantics}\\
=&\means{  \chk{\p}{e} }\n((\A(\n)\intersect P)\sqcup_{\n}\{\p\})\h & \mbox{ semantics}\\
=&\means{  \chk{\p}{e} }\n((\A(\n)\intersect P)\union\{\p\})\h & 
  \mbox{ def $\sqcup_{\n}$, using $\p\in\A(\n)$}\\ 
=&\means{  e }\n((\A(\n)\intersect P)\union\{\p\})\h & \mbox{ semantics}\\
=&\means{  e }\n(\A(\n)\intersect P)\h & \mbox{ $e$ and $\h$ are $\p$-pure,
  Lemma~\ref{lem:pur} } \\
=&\means{ \signs{\n}{e} }\n' P\h & \mbox{ semantics}
\end{array} \]
In the penultimate step, two uses are needed for the lemma: to remove
$\p$ and, in the case that $\p\in P$, to add it back.
\end{proof}

In order to deal with the password example we need the following conditional equivalence.

\begin{theorem}
\label{thm:commute}
For all $\n$, all $\p\in\A(\n)$, and all terms $e$
\[
\begin{array}{lcl}
\signs{\n}{\chk{\p}{e}}
& = & \chk{\p}{\signs{\n}{e}}
\end{array}
\]  
\end{theorem}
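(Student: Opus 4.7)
The plan is a direct calculation on both sides using the denotational semantics from Figure~\ref{fig:ds}, exploiting the hypothesis $\p \in \A(\n)$ to make the two check conditions coincide. Since we claim denotational equality, we must show
\[ \means{\D\proves\signs{\n}{\chk{\p}{e}}:\ty}\n' P \h = \means{\D\proves\chk{\p}{\signs{\n}{e}}:\ty}\n' P \h \]
for arbitrary outer principal $\n'$, privilege set $P$, and environment $\h$ of the appropriate type.

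First I would unfold the left-hand side: by the clause for $\SIGNS$ it becomes $\means{\chk{\p}{e}}\n(P\cap\A(\n))\h$, and by the clause for $\CHK$ this equals $\means{e}\n(P\cap\A(\n))\h$ if $\p\in P\cap\A(\n)$ and $\star$ otherwise. Next I would unfold the right-hand side symmetrically: by $\CHK$ it is $\means{\signs{\n}{e}}\n' P\h$ if $\p\in P$ and $\star$ otherwise, and then by $\SIGNS$ the non-error branch becomes $\means{e}\n(P\cap\A(\n))\h$.

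The two non-error branches already agree, so everything reduces to showing that the guards agree, i.e.\ $\p\in P\cap\A(\n) \iff \p\in P$. This equivalence is immediate from the hypothesis $\p\in\A(\n)$, which is precisely why the side condition is needed: without it, the LHS could silently suppress the check (when $\p\notin\A(\n)$) while the RHS would still raise $\star$ whenever $\p\notin P$.

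There is no real obstacle here; the step worth flagging is only that the equality is claimed for \emph{all} outer $(\n',P,\h)$, so the proof must be written with those quantified and must not accidentally use $\n'$ in either branch (it never appears, since $\SIGNS$ discards it). The compositionality of the semantics then upgrades this pointwise equality to a congruence, so the equation may be applied under arbitrary contexts elsewhere in the development.
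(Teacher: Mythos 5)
Your proposal is correct and is essentially the paper's own proof: both sides are unfolded by the semantic clauses for $\SIGNS$ and $\CHK$, and the only substantive step is that $\p\in P\cap\A(\n)\iff\p\in P$ follows from the hypothesis $\p\in\A(\n)$. The paper presents this as a single chain of equalities starting from the left-hand side, but the calculation and the use of the side condition are identical to yours.
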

\begin{proof}
We observe for any $\n',P, \h$:
\[\begin{array}{cll}
& \means{\signs{\n}{\chk{\p}{e}}}\n' P\h \\
=&\means{\chk{\p}{e}}\n(P\intersect \A(\n))\h & \mbox{ semantics}\\
=&\mifthenelse{\p\in(P\intersect\A(\n))}{\means{e}\n(P\intersect\A(\n))\h}{\star} &\mbox{ semantics}\\
=&\mifthenelse{\p\in P}{\means{e}\n(P\intersect\A(\n))\h}{\star} &\mbox{ sets, since $\p\in\A(\n)$}\\
=&\mifthenelse{\p\in P}{\means{\signs{\n}{e}}\n' P \h}{\star} &\mbox{ semantics}\\
=&\means{\chk{\p}{\signs{\n}{e}}}\n' P \h &\mbox{ semantics }
\end{array} 
\]
\end{proof}
The above proofs are examples of the benefit of a compositional
semantics.  The proofs are by direct calculation, without need for
induction.  For Theorem~\ref{thm:elim}, the proof goes through for
open terms as well, if the environment $\h$ is pure.  One expects
built-in constants to have pure and safe values.

\subsection{The password example}

We now revisit the password example, using 
Theorems~\ref{thm:elim} and~\ref{thm:commute} to eliminate checks.
We abbreviate $user,root$ as $u,r$.
\[\begin{array}{cll}
& passwd(\mbox{``mypass''}) \\
=& \{\mbox{because }passwd=(\LAM{x}{\signs{r}{\chk{p}{\enable{w}{writepass(x)}}}})\}\\
& \signs{r}{\chk{p}{\enable{w}{writepass(\mbox{``mypass''})}}}  \\
=& \{\mbox{because }writepass=(\LAM{x}{\signs{r}{\chk{w}{hwWrite(x,\mbox{``/etc/password''})}}})\}\\
& \signs{r}{\chk{p}{\enable{w}{
    \signs{r}{\chk{w}{hwWrite(\mbox{``mypass''},\mbox{``/etc/password''})}}}}} \\
=&\{\mbox{by Theorem~\ref{thm:commute} since }\A(r)=\{p,w\}\}\\
 &\chk{p}{\signs{r}
                 {\enable{w}
                         {\chk{w}
                              {\signs{r}
                                     {hwWrite(\mbox{``mypass''},
                                              \mbox{``/etc/password''})}}}}}\\
=&\{\mbox{by Theorem~\ref{thm:elim} since $w\in\A(r)$ and }
    \signs{r}{hwWrite(\ldots)}
    \mbox{ is $\p$-pure closed}\}\\
&\chk{\p}{\signs{r}
           \signs{r}{hwWrite(\mbox{``mypass''},\mbox{``/etc/password''})}}\\
=&\{\mbox{because }\signs{\n}{\signs{\n}{e}}=\signs{\n}{e}\}\\
&\chk{\p}{\signs{r}{hwWrite(\mbox{``mypass''},\mbox{``/etc/password''})}}
\end{array}
\]
In the last step we used the unconditional equation $\signs{\n}{\signs{\n}{e}}=\signs{\n}{e}$ which is easily proved.
Finally, we obtain:
\[
\begin{array}{cll}
& use=\signs{u}{\enable{\p}{passwd(\mbox{``mypass''})}}\\
=&\signs{u}{\enable{\p}{\chk{\p}
                        \signs{r}{hwWrite(\mbox{``mypass''},
                                          \mbox{``/etc/password''})}}}\\
=&\{\mbox{by Theorem~\ref{thm:elim} since $\p\in\A(u)$ and }
    \signs{r}{hwWrite(\ldots)}
    \mbox{ is $\p$-pure closed}\}\\
&\signs{u}{\signs{r}{hwWrite(\mbox{``mypass''},\mbox{``/etc/password''})}}
\end{array}
\]

\subsection{Other transformations}
We now give an example of a transformation which employs a weaker notion of
equality than the ones in Section~\ref{sec:elimcheck}, cf.~(\ref{eq:dependDenot}).
This is the \textsf{TestGrant} equation of Fournet and Gordon,\footnote{Theirs is slightly more general, as their $\TEST$ and $\ENABLE$ apply to permission sets; this can be desugared to singleton permissions as in our notation.}
 which, in our notation amounts to proving,
for all $\n, P, h, p, e_1, e_2$, with $P \subseteq \A(\n)$,
\begin{equation}
\label{eq:tg}
\means{\test{p}{e_1}{e_2}}\n P h = \means{\test{p}{\enable{p}{e_1}}{e_2}}\n P h
\end{equation}
To show (\ref{eq:tg}) we use the denotational semantics and prove
\[
\begin{array}{lcl}
\mifthenelse{p\in P}{\means{e_1}\n P h}{\means{e_2}\n P h}
&=&
\mifthenelse{p\in P}{\means{\enable{p}{e_1}}\n P h}{\means{e_2}\n P h}
\end{array}
\]
It suffices to prove that when $p \in P$, we have
$
\means{e_1}\n P h = \means{\enable{p}{e_1}}\n P h
$.
We calculate:
\[
\begin{array}{cll}
& \means{\enable{p}{e_1}}\n P h\\
= & \means{e_1}\n (P \lub_n \{p\}) h & \mbox{ semantics}\\
= & \means{e_1}\n P h & \mbox{ because $p \in P$ and $P \subseteq \A(n)$ implies $p\in \A(n)$}
\end{array}
\]

We have proved the correctness of several other primitive equations
in Fournet and Gordon's paper~\cite[Section 4.1]{FournetG03}.\footnote{For a few we need to consider the evident generalization of our language that allows permission sets in $\ENABLE$ (for \textsf{Grant Grant}, 
\textsf{Grant Frame}, and \textsf{Frame Grant Intersect}) and in $\TEST$ (for \textsf{Test $\union$}).
}
Specifically,
\textsf{Frame Frame, Frame Frame Frame, Frame Frame Grant, 
Frame Grant, 
Frame Grant Frame, 
Frame Grant Test, 
Frame Test Then, 
Grant Grant, 
Grant Frame, 
Grant Frame Grant}
 and \textsf{Test $\union$}. 
We have also proved the correctness of their derived equations 
\textsf{Frame Appl, Frame Frame Intersect} and 
\textsf{Frame Grant Intersect}.  
Our proofs of these equations do not require the restriction $P \subseteq \A(n)$.




We have also proved the tail call elimination laws in~\cite[Section 5.2]{FournetG03}. The basic idea in tail call elimination is to not build a new frame for the last call of a function; instead the callee
can directly return to the caller's caller. 
Tail call elimination is problematic with stack inspection, as a stack frame holds the principal for the current code (or, equivalently, the principal's static permissions).
As noted earlier, Clements and Felleisen \cite{ClementsF04} give an abstract machine for which tail call elimination is sound and efficient.  The calculus and small-step semantics of Fournet and Gordon \cite[Section 5.2]{FournetG03} allows a limited modeling of tail call elimination.  Here is one of their two transformations, in our notation:
\begin{equation}\label{eq:tail}
 \signs{\n_2}{ ((\LAM{x}{ \signs{\n_1}{e_1}}) \; e_2) }
  = ((\LAM{x}{ \signs{\n_1}{e_1}}) \; e_2)
\end{equation}
From left to right this can be read as dropping the ``frame'' of the calling context.  In their setting it can be read as a transition, provided that $e_2$ is a value.
They show that this added transition is admissible, in the sense of not changing outcomes, provided that 
the callee's static permissions are among the caller's, \ie $\A(n_1)\subseteq\A(n_2)$. 
For our purposes, a value is a boolean literal, a variable (whose value is thus in the environment) or an abstraction.
We shall prove that the equation holds in our semantics, under these conditions.
For the proof, it is convenient to use the following easily proved fact which holds for any $D, e_1, e_2, \n, P, h$.
\begin{equation}\label{eq:appLam}
\means{D\proves (\LAM{x}{ e_1 }) \; e_2 }\n P \h
\;=\;
\means{D,x:t\proves  e_1 }\n P \ext{\h}{x}{ \means{D\proves e_2}\n P \h }
\end{equation}
Note: if $e_2$ is a value then its semantics $\means{D\proves e_2}\n P \h$ is independent from $P$ (though not necessarily from $\n$ or $\h$); see Fig.~\ref{fig:ds}.
To prove (\ref{eq:tail}) we observe
\[\begin{array}{clll}
  & \means{D\proves \signs{\n_2}{ ((\LAM{x}{ \signs{\n_1}{e_1}}) \; e_2) }}\n P \h \\
=  & \means{D\proves ((\LAM{x}{ \signs{\n_1}{e_1}}) \; e_2) }\n (P\intersect \A(\n_2)) \h 
      & \mbox{by semantics of signs} \\
=  & \means{D,x:t\proves \signs{\n_1}{e_1}}\n (P\intersect \A(\n_2)) 
                                         \ext{\h}{x}{ \means{D\proves e_2}\n (P\intersect \A(\n_2)) \h} 
      & \mbox{lemma (\ref{eq:appLam})}  \\
=  & \means{D,x:t\proves e_1}\n (P\intersect \A(\n_1)) 
                                          \ext{\h}{x}{ \means{D\proves e_2}\n (P\intersect \A(\n_2)) \h}
        & \mbox{sem., $\A(\n_1)\subseteq \A(\n_2)$} \\
=  & \means{D,x:t\proves e_1}\n (P\intersect \A(\n_1)) 
                                          \ext{\h}{x}{ \means{D\proves e_2}\n P \h}
         & \mbox{$e_2$ value, Note above} \\
=  & \means{D,x:t\proves \signs{n_1}{e_1}}\n P \ext{\h}{x}{ \means{D\proves e_2}\n P \h}
         & \mbox{semantics of signs} \\
=  & \means{D\proves ((\LAM{x}{ \signs{\n_1}{e_1}}) \; e_2)}\n P h
       & \mbox{lemma (\ref{eq:appLam})} 
\end{array}\]
%
%
As with most of the other equations, the restriction $P\subseteq \A(\n)$ is not necessary here.
Fournet and Gordon give a second equation, also provable in our setting, that models tail calls involving $\ENABLE$.

\section{Using the Static Analysis}
\label{sec:use}
Section~\ref{sec:tran} gives several program transformations that can
be justified by the eager denotational semantics of our language. 
A more drastic transformation is possible under some conditions.
The safety results of Section~\ref{sec:sa}
show that if the static analysis derives a judgement 
$\Delta;\;\n\proves e:\th,\Pi$, then executing $e$ using a privilege set
that contains at least the enabled privileges $\Pi$ would not lead to a
security error. We should therefore be able to drop all $\ENABLE$'s and 
$\CHK$'s from $e$. If $e$ is $\TEST$-free, we can then show that 
the meaning of $e$ is the same as its meaning with $\ENABLE$'s and $\CHK$'s
erased. This is formalized below.
\begin{definition}
The erasure translation $(.)^-$ is defined as follows:
\[
\begin{array}{lcl}
\TRUE^- &=& \TRUE\\
x^-&=& x\\
(\ifthenelse{e_1}{e_2}{e_3})^- &=& \ifthenelse{e_1^-}{e_2^-}{e_3^-}\\
(\LAM{x}{e})^- &=& \LAM{x}{e^-} \\
(\letdecl{\recdecl{f}{x}{e_1}}{e_2})^- &=&
\letdecl{\recdecl{f}{x}{e_1^-}}{e_2^-}\\
(\signs{\n}{e})^- &=& \signs{\n}{e^-}\\
(\enable{\p}{e})^-&=& e^-\\
(\chk{\p}{e})^- &=& e^-\\
(\test{\p}{e_1}{e_2})^- & 
\multicolumn{2}{l}{\mbox{ is undefined}.}
\end{array}
\]
\end{definition}
\begin{theorem}\label{thm:erase}
Let $e$ be $\TEST$-free and let $\Empty;\;\n\proves e:\BOOL,\Ps$. Then for
all $P\in\P(\PRIV)$, if $\Ps\subseteq P$ then
$\means{\Empty\proves e:\BOOL}\n P\{\} = 
\means{\Empty\proves e^-:\BOOL}\n P\{\}$.
\end{theorem}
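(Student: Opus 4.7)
The theorem is stated only at type $\BOOL$ with the empty environment, but to reason inductively one must handle open terms of arbitrary type. At functional type the two denotations are generally \emph{not} equal: they consume the privilege-set parameter differently. The appropriate generalization is therefore a logical-relations argument.

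Define a family of relations $\Rel(\th)\subseteq\means{\th^*}_{\bot\star}\times\means{\th^*}_{\bot\star}$ by induction on $\th$: at base type, $\Rel(\BOOL)(b,b')$ iff $b=b'$ and neither is $\star$ (so it comprises $(\mathsf{true},\mathsf{true})$, $(\mathsf{false},\mathsf{false})$, and $(\bot,\bot)$); at arrow type, $\Rel(\th_1\annoto{\Ps}\th_2)(f,f')$ iff either $f=f'=\bot$, or both are functions (neither $\star$) and for every $P\supseteq\Ps$, every $P'\in\P(\PRIV)$, and all $(d,d')\in\Rel(\th_1)$, $\Rel(\th_2)(f P d,\, f' P' d')$ holds. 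Extend pointwise to contexts: $\Rel(\Delta)(\h,\h')$ iff $\Rel(\Delta.x)(\h.x,\h'.x)$ for every $x\in\dom(\Delta)$. The asymmetry between $P$ and $P'$ reflects that the erasure has no $\CHK$ or $\ENABLE$ and is insensitive to its privilege argument at every use site, whereas the original requires at least $\Ps$ to execute safely. Two auxiliary properties, proved along the lines of Fact~\ref{fact:safeMono} and Lemma~\ref{lem:adm}, are needed: $\Rel$ is monotone under subtyping ($\th\leq\th'$ and $\Rel(\th)(d,d')$ imply $\Rel(\th')(d,d')$), and $\Rel$ is admissible (closed under pointwise lubs of ascending chains in both components).

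The main lemma, to be proved by induction on the typing derivation, is: if $\Delta;\n\proves e:\th,\Ps$ with $e$ $\TEST$-free, $\Rel(\Delta)(\h,\h')$, and $\Ps\subseteq P$, then for every $P'\in\P(\PRIV)$,
\[
  \Rel(\th)\bigl(\means{\Delta^*\proves e:\th^*}\n P\h\,,\;\means{\Delta^*\proves e^-:\th^*}\n P'\h'\bigr).
\]
The theorem follows by instantiating $\th=\BOOL$, $\h=\h'=\{\}$, and $P'=P$: the $\BOOL$-clause of $\Rel$ is equality, safety (Theorem~\ref{thm:safe}) rules out $\star$ on the original side, and the absence of $\CHK$ in $e^-$ rules it out on the erased side. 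The cases for $\TRUE$, variables, $\IF$, $\LAM$, application (which invokes subtyping monotonicity), and $\SIGNS$ follow the pattern of Lemma~\ref{lem:safe}. The decisive security-sensitive cases are: for $\chk{\p}{e_0}$, the hypothesis $\{\p\}\cup\Ps\subseteq P$ makes the check succeed on the original side, reducing to $\means{e_0}\n P\h$, which is related to $\means{e_0^-}\n P'\h' = \means{(\chk{\p}{e_0})^-}\n P'\h'$ by the IH; for $\enable{\p}{e_0}$, the original evaluates $e_0$ under $P\sqcup_\n\{\p\}\supseteq\Ps\sqcup_\n\{\p\}$ while the erasure evaluates $e_0^-$ under the unaltered $P'$, and the IH applies precisely because $P'$ is allowed to be arbitrary on the erased side; the $\TEST$ case is excluded by hypothesis. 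The $\LETREC$ case uses admissibility plus a subsidiary induction on Kleene iterates, relating the $i$-th iterate of the original's fixpoint functional to the $i$-th iterate of the erased one at type $\th_1\annoto{\Ps}\th_2$, in exact parallel with the $\LETREC$ case of Lemma~\ref{lem:safe}.

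The principal obstacle I anticipate is calibrating the asymmetric quantification in $\Rel$ at arrow types so that the $\ENABLE$ case succeeds. Had $P'$ been constrained to equal $P$, or even to contain $\Ps$ nontrivially, the $\ENABLE$ step would require a separate lemma of the form $\means{e_0^-}\n(P\sqcup_\n\{\p\})\h' = \means{e_0^-}\n P\h'$ (that erased terms are insensitive to their privilege argument); that lemma is essentially true but proving it directly would demand yet another logical relation, so packaging it into $\Rel$ at arrow type is what keeps the argument self-contained. Once this is in place, the remaining cases are direct calculations mirroring the structure of the safety proof.
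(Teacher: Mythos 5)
Your proposal is correct and follows the same skeleton as the paper's proof: an erasure-indexed logical relation $\Rel$ defined by recursion on annotated types, monotonicity under subtyping, admissibility for the fixpoint case, and an induction on the typing derivation of which the theorem is the closed-$\BOOL$ instance. The one genuine divergence is in how the $\ENABLE$ case is discharged. The paper's relation (its Lemma~\ref{lem:rem}) uses the \emph{same} privilege set $P$ on both sides, both in the arrow clause of $\Rel$ and in the statement of the main lemma; in the $\ENABLE$ case it therefore first applies the induction hypothesis at $P\sqcup_{\n}\{\p\}$ on both sides and then rewrites the erased side back to $P$ by appealing to the $\p$-purity lemma (Lemma~\ref{lem:pur}), already proved in Section~\ref{sec:tran} for the check-elimination transformations --- exactly the ``separate lemma'' you predicted would be needed under a symmetric relation. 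Your asymmetric quantification ($P'$ arbitrary on the erased side) internalizes that insensitivity into the relation itself, so the $\ENABLE$ case closes by the induction hypothesis alone. Each choice has a cost: the paper reuses machinery it already has, but must (implicitly) track a purity condition on the environment when invoking Lemma~\ref{lem:pur}; yours keeps the erasure argument self-contained at the price of a slightly more intricate relation and correspondingly more bookkeeping in the subtyping-monotonicity and admissibility facts, which still go through. One small redundancy: since your $\Rel(\BOOL)$ already excludes $\star$ on either side, the main lemma at $\BOOL$ yields the desired equality directly, and the appeal to Theorem~\ref{thm:safe} is unnecessary.
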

\begin{proof}
Immediate consequence of Lemma~\ref{lem:rem} and definition $\Rel~\BOOL$ below.
\end{proof}
\begin{definition}
For each annotated type $\th$ the relation $\Rel~\th$ on 
$\means{\th^*}_{\bot\star}$ is defined as follows:
For all $\th$, $\Rel~ {\th}~\bot~\bot$ holds and otherwise
$\Rel~\th~d~d'$ is false if either $d$ or $d'$ is in $\{\bot,\star\}$.
For values other than $\bot, \star$,
the definition is by induction on structure of $\th$.

\begin{eqnarray*}
\Rel~ {\BOOL}~b~b' &\iff & b=b' \\
\Rel~(\th_1\annoto\Ps\th_2)~f~f' &\iff &
\forall P\in\P(\PRIV) .\forall d,d'\in\means{\th_1^*}.\; \\
&&  \Ps\subseteq P \land 
\Rel~{\th_1}~d~d'\implies \Rel~ {\th_2}~(f P d)~(f' P d')
\end{eqnarray*}
For annotated type environment $\Delta$, the predicate  
$\Rel~\Delta$ on $\means{\Delta^*}$ is defined by
$\Rel~\Delta~\h~\h' \iff \dom(\h)=\dom(\h')\mbox{  and } 
\forall x\in\dom(\h) . \Rel~(\Delta.x)~(\h.x)~(\h'.x)$.
\end{definition}

\begin{factx}\label{fact:relMono}
  $\th\leq \th'$ and $\Rel~\th~d~d'$ imply $\Rel~\th'~d~d'$.
\end{factx}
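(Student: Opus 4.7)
The plan is to mimic the proof of Fact~\ref{fact:safeMono} almost step for step, proceeding by induction on the derivation of $\th\leq\th'$. Before entering that induction, I would first dispense with the degenerate values: if either $d$ or $d'$ lies in $\{\bot,\star\}$, the hypothesis $\Rel~\th~d~d'$ forces $d=d'=\bot$ by the base clause, and then $\Rel~\th'~\bot~\bot$ holds for every $\th'$ by the same clause, regardless of the structure of $\th'$. This lets me focus the inductive argument on honest values in $\means{\th^*}$.

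For the base case $\BOOL\leq\BOOL$, the two relations $\Rel~\BOOL$ coincide with equality, so there is nothing to show. For the arrow case, suppose $(\th_1\annoto{\Ps}\th_2)\leq(\th'_1\annoto{\Ps'}\th'_2)$, which by the definition of subtyping unfolds into the three sub-derivations $\th'_1\leq\th_1$, $\th_2\leq\th'_2$, and the inclusion $\Ps\subseteq\Ps'$. Assuming $\Rel~(\th_1\annoto{\Ps}\th_2)~f~f'$, I would pick arbitrary $P\in\P(\PRIV)$ with $\Ps'\subseteq P$ and values $d,d'\in\means{{\th'_1}^{*}}$ with $\Rel~\th'_1~d~d'$, aiming to show $\Rel~\th'_2~(f P d)~(f' P d')$. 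The inclusion $\Ps\subseteq\Ps'\subseteq P$ is immediate; applying the inductive hypothesis to the smaller derivation $\th'_1\leq\th_1$ lifts $\Rel~\th'_1~d~d'$ to $\Rel~\th_1~d~d'$; instantiating the arrow clause for $f,f'$ then yields $\Rel~\th_2~(f P d)~(f' P d')$; and a final application of the inductive hypothesis to $\th_2\leq\th'_2$ delivers $\Rel~\th'_2~(f P d)~(f' P d')$, as required.

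The only subtle point, as in Fact~\ref{fact:safeMono}, is the contravariant step: the inductive hypothesis is being invoked on the sub-derivation $\th'_1\leq\th_1$, whose arguments are swapped relative to the outer statement. Because the induction is on the subtyping derivation (or equivalently on the joint structure of $\th$ and $\th'$), and both premises of the arrow rule are strictly smaller, the induction is well-founded and no further machinery is needed. I do not anticipate any real obstacle; the proof is essentially a two-sided version of Fact~\ref{fact:safeMono} and should be at most a few lines long.
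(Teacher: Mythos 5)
Your proof is correct and follows essentially the same route as the paper's: induction on the subtyping derivation, with the contravariant use of the inductive hypothesis on $\th'_1\leq\th_1$ and the covariant use on $\th_2\leq\th'_2$. The only difference is your explicit preliminary treatment of the $\bot$/$\star$ cases, which the paper leaves implicit in the definition of $\Rel$; this is a harmless (indeed slightly more careful) addition.
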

\begin{proof}
By induction on derivation of $\th\leq\th'$.  The result is clear for
$\BOOL\leq\BOOL$. For $(\th_1\annoto{\Ps}\th_2)\leq(\th'_1\annoto{\Ps'}\th'_2)$,
assume $\Rel~(\th_1\annoto{\Ps}\th_2)~f~f'$. 
To show $\Rel~(\th'_1\annoto{\Ps'}\th'_2)~f~f'$, consider 
any $P\in\P(\PRIV)$, such that $\Ps'\subseteq P$, and any 
$d, d'\in\means{{\th'_1}^*}$ with $\Rel~\th'_1~d~d'$. From the subtyping, 
we know
that $\Ps\subseteq\Ps'$, hence $\Ps\subseteq P$. Moreover, by induction
on derivation of $\th'_1\leq\th_1$, we obtain $\Rel~\th'_1~d~d'$ implies
$\Rel~\th_1~d~d'$. Hence from assumption $\Rel~(\th_1\annoto{\Ps}\th_2)~f~f'$, 
we obtain $\Rel~\th_2(f P d)(f' P d')$. Now by induction on derivation 
$\th_2\leq\th'_2$, we obtain $\Rel~\th'_2(f P d)(f' P d')$.  
\end{proof}

\begin{factx}
\label{fac:reladm}
The relation $\Rel$ preserves lubs. That is, for any $\th$,
let $u, u':\nats\to\means{\th^*}_{\bot\star}$ be ascending chains. Then,
$\forall i.\Rel~\th~u_i~u'_i$ implies $\Rel~\th~(\Lub_{i}u_i)(\Lub_{i}u'_i)$.
\end{factx}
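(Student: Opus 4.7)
The plan is to mirror the argument of Lemma~\ref{lem:adm}, but for a binary relation rather than a unary predicate. I would proceed by structural induction on the annotated type $\th$, in each case exploiting that $\Rel~\th~d~d'$ excludes $\star$ and permits $\bot$ only when both arguments are $\bot$, so the assumption $\forall i.\Rel~\th~u_i~u_i'$ already forces the two chains to be in lockstep with respect to $\bot$.

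For the base case $\th = \BOOL$, the cpo $\means{\BOOL}_{\bot\star} = \{\mathsf{true},\mathsf{false},\bot,\star\}$ is flat above $\bot$. From $\Rel~\BOOL~u_i~u_i'$ we know $u_i = u_i' \in \{\mathsf{true},\mathsf{false}\}$ or $u_i = u_i' = \bot$. An ascending chain in this flat cpo is either constantly $\bot$ (so both lubs are $\bot$ and $\Rel~\BOOL~\bot~\bot$ holds), or eventually stabilizes at some boolean $b$; in the latter case both chains stabilize at the \emph{same} $b$ by the index-wise relation, so $\Lub_i u_i = \Lub_i u_i' = b$ and $\Rel~\BOOL~b~b$ holds.

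For the inductive case $\th = \th_1 \annoto{\Ps} \th_2$, I would first observe that $\forall i.\Rel~\th~u_i~u_i'$ forces $u_i = \bot$ iff $u_i' = \bot$ (since otherwise one side lies in $\{\bot,\star\}$ while the other does not, and the relation fails). So either both chains are constantly $\bot$, in which case the lubs are both $\bot$ and we are done, or both chains eventually enter $\means{\th^*}$ at the same index and thereafter consist of functions related by $\Rel~(\th_1\annoto{\Ps}\th_2)$. In that case the lubs lie in $\means{\th^*}$ and are computed pointwise on the function-space tail. To verify the relation at the lubs, take any $P$ with $\Ps\subseteq P$ and any $d,d'\in\means{\th_1^*}$ with $\Rel~\th_1~d~d'$. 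Pointwise lubs give
\[
(\Lub_i u_i)\,P\,d \;=\; \Lub_i (u_i\,P\,d) \quad\text{and}\quad (\Lub_i u_i')\,P\,d' \;=\; \Lub_i (u_i'\,P\,d').
\]
By the hypothesis on each $i$ (and the vacuous truth of $\Rel~\th_2~\bot~\bot$ for the initial $\bot$-prefix), the chains $i \mapsto u_i\,P\,d$ and $i \mapsto u_i'\,P\,d'$ are pointwise related by $\Rel~\th_2$. The induction hypothesis on $\th_2$ then yields $\Rel~\th_2~(\Lub_i (u_i\,P\,d))~(\Lub_i (u_i'\,P\,d'))$, which is exactly the required instance.

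The only mildly delicate point is the bookkeeping around $\bot$ and $\star$: one must rule out the possibility that a chain passes through $\star$, and one must check that mixed $\bot$/non-$\bot$ pairs cannot occur under the hypothesis $\Rel~\th~u_i~u_i'$. Both follow immediately from the definition of $\Rel$, so no real obstacle remains, and the induction goes through as a direct analogue of Lemma~\ref{lem:adm}.
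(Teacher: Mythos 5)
Your proof is correct and follows essentially the same route as the paper's: structural induction on $\th$, with the base case settled by the flatness of $\means{\BOOL}_{\bot\star}$ and the arrow case reduced to the induction hypothesis at $\th_2$ via pointwise lubs. The only difference is that you make explicit the $\bot$/$\star$ bookkeeping (that the two chains must agree on where they are $\bot$, and that neither can pass through $\star$), which the paper's proof leaves implicit; this is a welcome refinement but not a different argument.
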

\begin{proof}
By structural induction on $\th$. When $\th=\BOOL$, we have
$\Lub_{i}u_i=\Lub_{i}u'_i= 
\mathsf{true}$ or $\mathsf{false}$ or $\bot$.  Thus the result holds by 
definition $\Rel$. \\
When $\th=(\th_1\annoto{\Ps}\th_2)$, assume $P\in\P(\PRIV)$ and 
$d, d'\in\means{\th_1^*}$, such that $\Ps\subseteq P$ and 
$\Rel~\th_{1}d~d'$. Then, from assumption $\Rel~(\th_1\annoto{\Ps}\th_2)~u_i~u'_i$ 
we obtain $\Rel~\th_2~(u_{i} P d)(u'_{i} P d')$ for every $i$. Hence, by 
the induction hypothesis on $\th_2$, we get 
$\Rel~\th_2~(\Lub_{i}(u_{i} P d))(\Lub_{i}(u'_{i} P d'))$. Because 
lubs are pointwise, we get $\Rel~\th_2~((\Lub_{i}u_{i}) P d)((\Lub_{i}u'_{i}) P d')$.
\end{proof}
\begin{lemma}
\label{lem:rem}
Suppose $\Delta;\;\n\proves e:\th,~\Ps$ is derivable and $e$ is $\TEST$-free.
Then for all $P\in\P(\PRIV)$, for all $\h, \h^-\in
\means{\Delta^*}$, if $\Rel~ {\Delta}~\h~\h^-$ and $\Ps\subseteq P$ then  
$\Rel~\th~u~u^-$, where $u=\means{\Delta^*\proves e:\th^*}\n P \h$ 
and $u^-=\means{\Delta^*\proves e^-:\th^*}\n P \h^-$. 
\end{lemma}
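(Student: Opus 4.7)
My plan is to prove Lemma~\ref{lem:rem} by induction on the derivation of $\Delta;\;\n\proves e:\th,~\Ps$, in close parallel with the proof of Lemma~\ref{lem:safe}. In each case I will assume $\Rel\ \Delta\ \h\ \h^-$ and $\Ps\subseteq P$, and write $u=\means{\Delta^*\proves e:\th^*}\n P\h$ and $u^-=\means{\Delta^*\proves e^-:\th^*}\n P\h^-$; the goal is $\Rel\ \th\ u\ u^-$.

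The cases for $\TRUE$, variables, if-then-else, $\LAM$, application, letrec, signs, and chk can all be dispatched by the induction hypothesis and mirror the corresponding cases of Lemma~\ref{lem:safe} step for step. The variable case is immediate from $\Rel\ \Delta\ \h\ \h^-$. The if-then-else case uses the IH at $\BOOL$ on the guard to obtain equal booleans (or both $\bot$), followed by IH on the chosen branch. The $\LAM$ case unfolds the semantics and applies IH on the body under environments extended with $\Rel$-related arguments. Application combines IH on both sub-expressions with Fact~\ref{fact:relMono} to absorb the subtyping $\th''_1\leq\th'_1$. The letrec case replicates the Kleene-chain argument of Lemma~\ref{lem:safe}, showing $\Rel\ (\th_1\annoto{\Ps}\th_2)\ g_i\ g_i^-$ at each approximant and using Fact~\ref{fac:reladm} to pass to the limit. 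Signs applies IH at the new principal $\n'$ with privilege set $P\cap\A(\n')$. Chk is direct because $\{\p\}\cup\Ps'\subseteq P$ forces $\p\in P$, so the check succeeds on the left while the erasure merely drops the $\CHK$ on the right, reducing everything to IH on $e'$.

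The real difficulty is the $\enable{\p}{e'}$ case. Here $u=\means{e'}\n(P\sqcup_{\n}\{\p\})\h$ while $u^-=\means{(e')^-}\n P\h^-$, so the two meanings are computed under different privilege sets, and a straight application of the IH on $e'$ at privilege set $P\sqcup_{\n}\{\p\}$ only delivers $\Rel\ \th\ u\ \means{(e')^-}\n(P\sqcup_{\n}\{\p\})\h^-$, which has the wrong right-hand side. To close the gap I would strengthen the induction hypothesis, replacing $P$ by two potentially distinct privilege sets $P_1,P_2$ (keeping $\Ps\subseteq P_1$ but leaving $P_2$ unconstrained) and using a refined variant of $\Rel$ at function type that compares $f$ and $f'$ under independent $P_1,P_2$. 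The easy cases above survive this strengthening essentially verbatim, since their arguments never rely on the two sides sharing a privilege set; the enable case then becomes the immediate instantiation with $P_1=P\sqcup_{\n}\{\p\}$ and $P_2=P$, using $\Ps\sqcup_{\n}\{\p\}\subseteq P\sqcup_{\n}\{\p\}$. Specializing to $P_1=P_2=P$ at the end recovers Lemma~\ref{lem:rem} as stated, from which Theorem~\ref{thm:erase} falls out at $\BOOL$ with the empty environment.
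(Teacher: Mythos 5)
Your proposal is correct in substance and, importantly, you have put your finger on exactly the right crux: every case except $\enable{\p}{e'}$ goes through by a routine induction mirroring Lemma~\ref{lem:safe}, and the enable case is the one place where the two sides are evaluated under different privilege sets. Where you diverge from the paper is in how that case is closed. The paper keeps $\Rel$ exactly as defined (the same $P$ on both sides) and instead observes that $e^-$, being the erasure of a $\TEST$-free expression, contains no $\CHK$ or $\TEST$ at all and is therefore $\p$-pure; Lemma~\ref{lem:pur} then yields $\means{e^-}\n(P\sqcup_{\n}\{\p\})\h^- = \means{e^-}\n P\h^-$, which converts the conclusion of the plain induction hypothesis into the desired one. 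Your alternative --- decoupling the two privilege sets in the logical relation and constraining only the left one by $\Ps$ --- bakes the privilege-insensitivity of the erased program into the relation itself rather than importing it from the purity lemma. That is sound, and it buys you independence from Lemma~\ref{lem:pur} (and from the environment-purity side condition that lemma carries, which the paper's own appeal glosses over); the price is that Facts~\ref{fact:relMono} and~\ref{fac:reladm} must be restated and reproved for the refined relation, and that, because of contravariance at arrow types, specializing $P_1=P_2$ does not literally recover the original $\Rel$ except at $\BOOL$ --- which is all Theorem~\ref{thm:erase} needs, so the discrepancy is harmless provided you state that you are proving a variant of the lemma. Either route works; the paper's is shorter given that Lemma~\ref{lem:pur} is already in hand.
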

(Note that $h^-, u^-$ are just suggestively named identifiers whereas
$e^-$ is the erasure of $e$.)
Theorem \ref{thm:erase} follows from the lemma because $\Rel~\Empty~\{\}~\{\}$ and
by definition $\Rel~\BOOL$, $\means{\Empty\proves e:\BOOL}\n P\{\}=
\means{\Empty\proves e^-:\BOOL}\n P\{\}$.
\begin{xproof}
of Lemma. Go by induction on the typing derivation, 
$\Delta;\;\n\proves e:\th,~\Ps$.
Throughout, we assume $P\in\P(\PRIV)$ and $\h,\h^-\in\means{\Delta^*}$ and 
$\Rel~\Delta~\h~\h^-$. Let
$u=\means{\Delta^*\proves e:\th^*}\n P \h$ and 
$u^-=\means{\Delta^*\proves e^-:\th^*}\n P \h^-$ for each case of $e$.
\begin{itemize}
\item Case $\TRUE$:
Then, $u = \mathsf{true} = u^-$ and $\Rel~ {\BOOL}~u~u^-$ by definition $\Rel$.
\item Case $x$:
Then, $u=\h.x$ and $u^-=\h^-.x$. And, $\Rel~\th~u~u^-$ follows
from assumption $\Rel~ {\Delta}~\h~\h^-$.
\item Case $\ifthenelse{e}{e_1}{e_2}$:
Then $\Ps_1\union\Ps_2\union\Ps_3\subseteq P$, and 
\begin{eqnarray*}
u&=&\mifthenelse{b}{\means{\Delta^*\proves e_1:\th^*}\n P \h}
                   {\means{\Delta^*\proves e_2:\th^*}\n P \h}\\
u^-&=&\mifthenelse{b^-}{\means{\Delta^*\proves e_1^-:\th^*}\n P \h^-}
                       {\means{\Delta^*\proves e_2^-:\th^*}\n P \h^-}
\end{eqnarray*}
where 
$b=\means{\Delta^*\proves e:\BOOL}\n P \h$ and
$b^-=\means{\Delta^*\proves e^-:\BOOL}\n P \h^-$. By the induction hypothesis
on the typing derivation of $e$, noting that $\Ps_1\subseteq P$, we have 
$\Rel~\BOOL~b~b^-$.
If $b=\bot=b^-$ then $u=\bot=u^-$ and $\Rel~\th~\bot~\bot$.
Otherwise, $b=\mathsf{true}$ or $b=\mathsf{false}$. In the
former case, by the induction hypothesis on the typing derivation of 
$e_1$, noting that $\Ps_2\subseteq P$, we have 
$\Rel~ {\th}~u~u^-$. In the latter case, by the induction hypothesis on 
the typing derivation of $e_2$, noting that $\Ps_3\subseteq P$, we have 
$\Rel~ {\th}~u~u^-$. 
\item Case $\LAM{x}{e}$:
Then 
\(\begin{array}[t]{rcl}
u&=&
\lam{P'}
   {\lam{d}
        {\means{\Delta^*,x:\th_{1}^*\proves e:\th_{2}^*}\n P'\ext{\h}{x}{d}}}\\
u^-&=&
\lam{P'}
   {\lam{d^-}
        {\means{\Delta^*,x:\th_{1}^*\proves e^-:\th_{2}^*}\n P'\ext{\h^-}{x}{d^-}}}
\end{array}\)
\\
To prove $\Rel~(\th_{1}\annoto{\Ps}\th_{2})~u~u^-$, consider any 
$P'\in\P(\PRIV)$ and any $d,d^-\in\means{\th_{1}^*}$ such that 
$\Ps\subseteq P'$ and $\Rel~ {\th_{1}}~d~d^-$, to show 
$\Rel~\th_2~(u P' d')~(u^-P'd^-)$.
By semantics, 
\begin{eqnarray*}
u P' d &=& 
\means{\Delta^*,x:\th_{1}^*\proves e:\th_{2}^*}\n P'\ext{\h}{x}{d}\\
u^- P' d^- &=& 
\means{\Delta^*,x:\th_{1}^*\proves e^-:\th_{2}^*}\n P'\ext{\h^-}{x}{d^-}
\end{eqnarray*}
So the induction hypothesis for $e$ yields $\Rel~\th_2~(u P' d)~(u^- P' d^-)$
provided that $\Ps\subseteq P'$ and $\Rel~ {(\Delta,x:\th_{1})}\ext{\h}{x}{d}
\ext{\h^-}{x}{d^-}$.
We have $\Ps\subseteq P'$ by assumption, and 
$\Rel~ {(\Delta,x:\th_{1})}\ext{\h}{x}{d}\ext{\h^-}{x}{d^-}$ follows from
$\Rel~ {\Delta}~\h~\h^-$ and $\Rel~ {\th_{1}}~d~d^-$.
\item Case $e_1\;e_2$:
Let $f=\means{\Delta^*\proves e_1:(\th_{1}\annoto{\Ps}\th_{2})^*}\n P \h$ and 
$d=\means{\Delta^*\proves e_2:{\th'_1}^*}\n P \h$, so that $u=f P d$.  
Let $f^-=\means{\Delta^*\proves e_1^-:(\th_{1}\annoto{\Ps}\th_{2})^*}\n P \h^-$
and $d^-=\means{\Delta^*\proves e_2^-:{\th'_1}^*}\n P \h^-$, so that 
$u^-=f^- P d^-$.  
(Recall that $\th'_1\leq\th_1$ implies ${\th'_1}^*=\th_{1}^*$ so the
applications $f P d$ and $f^-Pd^-$ make sense.)  From $\Rel~\Delta~\h~\h^-$ 
and assumption $\Ps\union\Ps_1\union\Ps_2\subseteq P$,  
we get by induction on $e_1$ that $\Rel~(\th_{1}\annoto{\Ps}\th_{2})~f~f^-$, 
and we get $\Rel~ {\th'_1}~d~d^-$ by induction on $e_2$. 
By $\th'_1\leq \th_1$ and Fact~\ref{fact:relMono} we have $\Rel~
{\th'_1}~d~d^-\implies \Rel~ {\th_1}~d~d^-$. 
Then by definition $\Rel~(\th_{1}\annoto{\Ps}\th_{2})~f~f^-$, since
$\Ps\subseteq P$, we get $\Rel~ {\th_2}(f P d)(f^-P d^-)$.
\item Case $\letdecl{\recdecl{f}{x}{e_1}}{e_2}$:
Then, $\Ps\union\Ps_1\subseteq P$.\\
Now 
\(\begin{array}[t]{rcl}
u&=&\means{\Delta^*,f:(\th_1\annoto{\Ps}\th_2^)*\proves e_2:\th^*}\n P \ext{\h}{f}{\fix\;G}\\
u^-&=&\means{\Delta^*,f:(\th_1\annoto{\Ps}\th_2^)*\proves e_2^-:\th^*}\n P \ext{\h^-}{f}{\fix\;G^-}
\end{array}\)
\\
where
\(\begin{array}[t]{rcl}
G(g)&=&
\lam{P'}
    {\lam{d}
         {\means{\Delta^*,f:(\th_1\annoto{\Ps}\th_2)^*,x:\th_1^*
                 \proves e_1:\th_2^*}\n P'\recext{\h}{f}{g}{x}{d}}}\\
G^-(g^-)&=&
\lam{P'}
    {\lam{d^-}
         {\means{\Delta^*,f:(\th_1\annoto{\Ps}\th_2)^*,x:\th_1^*
                 \proves e_1^-:\th_2^*}\n P'\recext{\h^-}{f}{g^-}{x}{d^-}}}
\end{array}\)
\\
To show $\Rel~ {\th}~u~u^-$ by induction on $e_2$, we need 
$\Ps_1\subseteq P$ and 
\[
\Rel~ {(\Delta,f:\th_1\annoto{\Ps}\th_2)}~\ext{\h}{f}{\fix\;G}~\ext{\h^-}{f}{\fix\;G^-} \] 
The former follows from assumption $\Ps\union\Ps_1\subseteq P$.
The latter follows from assumption, $\Rel~ {\Delta}~\h~\h^-$, and 
$\Rel~(\th_1\annoto{\Ps}\th_2)(\fix\;G)(\fix\;G^-)$, which we now proceed
to show.

Now $\fix\;G=\Lub_{i}g_i$, where $g_0=\lam{P'}{\lam{d\in\means{\th_1^*}}{\bot}}$ and
$g_{i+1}=G(g_i)$. 
Also $\fix\;G^-=\Lub_{i}g_i^-$, where $g_0^-=\lam{P'}{\lam{d^-\in\th_1^*}{\bot}}$ and
$g_{i+1}^-=G^-(g_i^-)$. 
And, $\Rel~(\th_1\annoto{\Ps}\th_2)(\fix\;G)(\fix\;G^-)$ is a consequence 
of the following claim:
\begin{eqnarray}
  \label{eq:rel1}
& \forall i.\; \Rel~(\th_1\annoto{\Ps}\th_2)~g_i~g_i^- 
\end{eqnarray}
Then from Lemma~\ref{fac:reladm}, we get 
$\Rel~(\th_1\annoto{\Ps}\th_2)(\Lub_{i}g_i)(\Lub_{i}g_i^-)$.  
It remains to show (\ref{eq:rel1}), for which we proceed by induction on $i$.\\
Base case: Show $\Rel~(\th_1\annoto{\Ps}\th_2)~g_0~g_0^-$. Assume any 
$P'\in\P(\PRIV)$ and any $v, v^-\in\means{\th_1^*}$, such that 
$\Ps\subseteq P'$ and $\Rel~ {\th_1}~v~v^-$. Then 
$g_{0}P'v=\bot=g_{0}^-P'v^-$  
and $\Rel~ {\th_2}(g_{0}P'v)(g_{0}^-P'v^-)$. \\
Induction step: Assume $\Rel~(\th_1\annoto{\Ps}\th_2)~g_i~g_i^-$, to show
$\Rel~(\th_1\annoto{\Ps}\th_2)~g_{i+1}~g_{i+1}^-$. \\
Now
\( \begin{array}[t]{rcl}
g_{i+1}&=&
\lam{P'}{\lam{d}{\means{\Delta^*,f:(\th_1\annoto{\Ps}\th_2)^*,x:\th_1^*
\proves e_1:\th_2^*}\n P \recext{\h}{f}{g_i}{x}{d}}}\\
g_{i+1}^-&=&
\lam{P'}{\lam{d^-}{\means{\Delta^*,f:(\th_1\annoto{\Ps}\th_2)^*,x:\th_1^*
\proves e_1^-:\th_2^*}\n P \recext{\h^-}{f}{g_i^-}{x}{d^-}}} 
\end{array}
\)
\\
Assume any
$P'\in\P(\PRIV)$ and $v, v^-\in\means{\th_1^*}$, such that $\Ps\subseteq P'$
and $\Rel~ {\th_1}~v~v^-$.  
Then 
\begin{eqnarray*}
g_{i+1}P'v
&=&
\means{\Delta^*,f:(\th_1\annoto{\Ps}\th_2)^*,x:\th_1^*
\proves e_1:\th_2^*}\n P' \recext{\h}{f}{g_i}{x}{v}\\
g_{i+1}^-P'v^-
&=&
\means{\Delta^*,f:(\th_1\annoto{\Ps}\th_2)^*,x:\th_1^*
\proves e_1:\th_2^*}\n P' \recext{\h^-}{f}{g_i^-}{x}{v^-}
\end{eqnarray*}
Note that
$\Rel~(\Delta,f:\th_1\annoto{\Ps}\th_2,x:\th_1)~\recext{\h}{f}{g_i}{x}{v}~
\recext{\h^-}{f}{g_i^-}{x}{v^-}$.
Therefore, by the main induction hypothesis on the typing derivation 
$\Delta, f:\th_1\annoto{\Ps}\th_2, x:\th_1;\;\n
      \proves e_1:\th_2, ~\Ps$, since $\Ps\subseteq P$, we obtain
$\Rel~ {\th_2}(g_{i+1}P'v)(g_{i+1}^-P'v^-)$.
\item Case $\signs{\n'}{e}$:
Then $\Ps\subseteq P$ and 
$u=\means{\Delta^*\proves e:\th^*}\n'(P\intersect\A(\n')) \h$. 
The induction hypothesis on the typing derivation
of $e$ can be used to obtain $\Rel~ {\th}~u~u^-$, 
because $\Ps\subseteq(P\intersect\A(\n'))$ which follows from 
assumption $\Ps\subseteq P$ and side condition $\Ps\subseteq \A(\n')$.
\item Case $\enable{\p}{e}$:
Then $\Ps\subseteq P$ and 
$u=\means{\Delta^*\proves e:\th^*}\n(P\sqcup_{\n}\{\p\}) \h$.
By the induction hypothesis for $e$, noting that 
$(\Ps\sqcup_{\n}\{\p\})\subseteq(P\sqcup_{\n}\{\p\})$, we have 
$\Rel~\th~u~\means{\Delta^*\proves e^-:\th^*}\n(P\sqcup_{\n}\{\p\}) \h^-$.
But now $e^-$ is $\p$-pure. So by Lemma~\ref{lem:pur},
$\means{\Delta^*\proves e^-:\th^*}\n(P\sqcup_{\n}\{\p\}) \h^- =
\means{\Delta^*\proves e^-:\th^*}\n P\h^-$. 
But $u^-=\means{\Delta^*\proves e^-:\th^*}\n P\h^-$. 
Hence $\Rel~\th~u~u^-$.
\item Case $\chk{p}{e}$:
\begin{sloppypar}
Then $\Ps\union\{\p\}\subseteq P$, hence $\p\in P$.
Now \( u=\mifthenelse{\p\in P}{\means{\Delta^*\proves e:\th^*}\n P \h}{\star} \).
Since $\p\in P$, we have,
$u=\means{\Delta^*\proves e:\th^*}\n P \h$ and, by the induction
hypothesis on the typing derivation of $e$, we have 
$\Rel~\th~u~\means{\Delta^*\proves e^-:\th^*}\n P \h^-$. Hence
$\Rel~ {\th}~u~u^-$.
\end{sloppypar}
\end{itemize}
\end{xproof}
\section{Stack Semantics}
\label{sec:stk}

This section gives a formal semantics using stack inspection, and
shows that for standard expressions it coincides with the eager
semantics.  The connection is much more direct than that of Wallach,
Appel and Felten, so a complete detailed proof is not very lengthy.

Because the operations on the stack are in fact stack-like, it is
straightforward to give a denotational style semantics parameterized
on the stack.  We define $\stacks =
\mathsf{nonempty~list~of}(\PRINC\times\P(\PRIV) )$, taken as a cpo ordered by
equality.  The top is the head of the list, and we write infix $::$
for cons, so $\tuple{\n,P}::S$ is the stack with $\tuple{\n,P}$ on top
of $S$, as in Section~\ref{sec:ov}.  We also use the predicate
$\check$ defined there, and recall the definition $p\in \privs~S \iff
\check(p,S)$.

\begin{factx}
\label{fac:emp}
For all $S$ and all $n$ we have
$\privs(S)\intersect\A(\n)=\privs(\tuple{\n,\Empty}::S)$.
\end{factx}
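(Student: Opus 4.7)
The plan is to prove set equality by showing that the two sides have the same elements, and this reduces to pure unfolding of the definitions of $\check$ and $\privs$. No induction on the stack $S$ is needed, because the crucial step only unfolds one layer of $\check$.

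First I would fix an arbitrary privilege $\p$ and observe that by the definition $\p\in\privs~S \iff \check(\p,S)$, membership in $\privs(\tuple{\n,\Empty}::S)$ is equivalent to $\check(\p,\tuple{\n,\Empty}::S)$. Next I would apply the recursive clause of $\check$, obtaining $\p\in\A(\n) \land (\p\in\Empty \lor \check(\p,S))$. Since $\p\in\Empty$ is vacuously false, this simplifies to $\p\in\A(\n) \land \check(\p,S)$, which by the definition of $\privs$ is exactly $\p\in\A(\n) \land \p\in\privs(S)$, i.e.\ $\p\in\privs(S)\intersect\A(\n)$.

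Thus the two sides have the same elements, so they are equal. There is no real obstacle here; the fact is essentially a direct computation from the definitions given in Section~\ref{sec:ov}, and its role is presumably as a convenient lemma for the later equivalence proof between the stack semantics and the eager semantics (where $P\intersect\A(\n')$ in the $\SIGNS$ clause of Fig.~\ref{fig:ds} needs to be matched against the stack operation of pushing $\tuple{\n',\Empty}$).
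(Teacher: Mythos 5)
Your proof is correct and is essentially identical to the paper's: both argue elementwise, unfold one layer of $\check$ on the pushed frame $\tuple{\n,\Empty}$, discard the vacuous disjunct $\p\in\Empty$, and fold the result back into $\p\in\A(\n)\land\p\in\privs(S)$. No differences worth noting.
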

\begin{xproof}
The sets are equal because for any $p$
\[\begin{array}{lcll}
\p\in\privs(\tuple{\n,\Empty}::S) 
&\iff& \check(\p,(\tuple{\n,\Empty}::S)) &\mbox{ by def $\privs$} \\
&\iff& \p\in\A(\n) \land \check(\p,S) 
  &\mbox{ by def $\check$ and $p\not\in\Empty$} \\
&\iff& \p\in\A(\n) \land \p\in\privs(S)&\mbox{ by def $\privs$ \; \blackslug} 
\end{array} \]
\end{xproof}
The stack semantics of an expression is a function
\[ \meanss{\D\proves e:\ty}\in \stacks
\to\meanss{\D}\to\meanss{\ty}_{\bot\star} \]
Just as in the eager semantics, we need to account for dynamic binding
of privileges by interpreting arrow types using an extra parameter.
The stack semantics of types is as follows.
\[
\begin{array}{lcl}
\meanss{\BOOL} &=& \{ \mathsf{true},  \mathsf{false} \} 
\\
\meanss{\ty_1\to\ty_2} &=& \stacks\to\meanss{\ty_1}\to\meanss{\ty_2}_{\bot\star}
\end{array}
\]
The semantics of expressions is in Figure~\ref{fig:ss}.

\begin{figure*}
\hrule
\medskip
\begin{small}
\[
\begin{array}{lcl}
\meanss{\D\proves\TRUE:\BOOL}S \h &=& \mathsf{true}
\\
\meanss{\D\proves x:\ty}S \h &=& h.x 
\\
\meanss{\D\proves\ifthenelse{e}{e_1}{e_2}:\ty}S \h &=&
  \vmlet{b\;=\;\meanss{\D\proves e:\BOOL}S \h}
       {\;\;\mifthenelse{b}
                    {\meanss{\D\proves e_1:\ty}S \h}
                    {\meanss{\D\proves e_2:\ty}S \h}}
\\
\meanss{\D\proves\LAM{x}{e}:\ty_1\to\ty_2}S \h &=&
\begin{array}[t]{l}
  \lam{S' \in \stacks }{
       \lam{d\in\meanss{\ty_1}}}{} \\
        \meanss{\D, x:\ty_1\proves e:\ty_2}S' \ext{\h}{x}{d}
\end{array}
\\
\meanss{\D\proves e_1\;e_2:\ty_2}S \h &=&
  \vmlet{f\;=\;\meanss{\D\proves e_1:\ty_1\to\ty_2}S \h}
        {\mletml{d}{\meanss{\D\proves e_2:\ty_1}S \h}{f S d}}
\\
\multicolumn{3}{l}{
\meanss{\D\proves\letdecl{\recdecl{f}{x}{e_1}}{e_2}:\ty}S \h} \\
\multicolumn{3}{l}{
\qquad 
= \vmlet{G(g)\;=\;
\lam{S'}
    {\lam{d}
         {\meanss{\D,f:\ty_1\to\ty_2,x:\ty_1\proves e_1:\ty_2}S
          \recext{\h}{f}{g}{x}{d}}}}
{\meanss{\D,f:\ty_1\to\ty_2\proves e_2:\ty}S \ext{\h}{f}{\fix\;G}}} 
\\
\meanss{\D\proves\signs{\n'}{e}:\ty}S \h &=&
\meanss{\D\proves e:\ty} (\tuple{\n',\Empty}::S) \h
\\
\meanss{\D\proves\enable{\p}{e}:\ty}(\tuple{\n,P}::S) \h 
&=&
\meanss{\D\proves e:\ty}(\tuple{\n,P\union\{\p\}}::S) \h
\\
\meanss{\D\proves\chk{\p}{e}:\ty}S \h &=&
\mifthenelse{\check(\p,S)}{\meanss{\D\proves e:\ty}S \h}{\star}
\\
\meanss{\D\proves\test{\p}{e_1}{e_2}:\ty}S \h &=&
\mifthenelse{\check(\p,S)}{\meanss{\D\proves e_1:\ty}S \h }
            {\meanss{\D\proves e_2:\ty}S \h }
\end{array}
\]
\end{small}

\medskip
\hrule
\medskip
\caption{Stack semantics}
\label{fig:ss}
\end{figure*}

We can now relate the denotational semantics of Figure~\ref{fig:ds} to the
stack semantics of Figure~\ref{fig:ss}.
\begin{theorem}[Consistency]
\label{thm:cons}
For any standard expression $e$ and 
stack $(\tuple{\n, P'}::S)$, we have 
\[ \means{\Empty\proves e:\BOOL}\n P \{\}~=~
\meanss{\Empty\proves e:\BOOL}(\tuple{\n, P'}::S) \{\} 
\qquad\mbox{where $P=\privs(\tuple{\n, P'}::S)$}.
\]
\end{theorem}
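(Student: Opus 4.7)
The plan is to establish a strengthened auxiliary lemma by induction on $e$, using a type-indexed logical relation $\Sim$ between stack and eager semantic values. At $\BOOL$ the relation is just equality (lifted to $\bot$ and $\star$ as in Lemmas~\ref{lem:safe} and~\ref{lem:rem}); at arrow types I would set
\[
\Sim~(\ty_1\to\ty_2)~f_s~f_e \iff \forall S, d_s, d_e.\; \Sim~\ty_1~d_s~d_e \implies \Sim~\ty_2~(f_s\, S\, d_s)~(f_e\, (\privs S)\, d_e),
\]
and extend $\Sim$ pointwise to environments. The auxiliary lemma then reads: for every standard $\D\proves e:\ty$, every stack $S$ whose top frame has principal $\n$, and every $\h_s,\h_e$ with $\Sim~\D~\h_s~\h_e$, we have $\Sim~\ty~(\meanss{\D\proves e:\ty}S\h_s)~(\means{\D\proves e:\ty}\n(\privs S)\h_e)$. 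Theorem~\ref{thm:cons} follows by specializing to closed expressions of type $\BOOL$, since $\Sim~\BOOL$ is equality.

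The induction on $e$ mirrors the layout of Lemmas~\ref{lem:safe} and~\ref{lem:rem}. Most cases are bookkeeping. For $\signs{\n'}{e}$, Fact~\ref{fac:emp} supplies the identity $\privs(\tuple{\n',\Empty}::S) = \privs(S)\intersect\A(\n')$, which is exactly the privilege set used by the eager rule. For $\enable{\p}{e}$, a short case split on $\p\in\A(\n)$ shows $\privs(\tuple{\n,P_{\mathit{top}}\cup\{\p\}}::S') = \privs(S)\sqcup_\n\{\p\}$, reconciling the two rules. For $\chk{\p}{e}$ and $\test{\p}{e_1}{e_2}$ both sides branch on the same condition because $\check(\p,S) \iff \p\in\privs(S)$ by definition. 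Application follows directly from the arrow clause of $\Sim$ instantiated at the current $S$ and $\privs(S)$. The recursive case follows the pattern of Lemma~\ref{lem:safe}: show by induction on $i$ that the $i$th iterates of the two fixpoint functionals are $\Sim$-related, then invoke a preliminary admissibility fact for $\Sim$ provable by structural induction on types using pointwise joins, in the spirit of Lemma~\ref{lem:adm} and Fact~\ref{fac:reladm}.

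The substantive case, and the place where standardness is indispensable, is $\LAM{x}{e'}$. Because $e$ is standard, $e' = \signs{\n'}{e''}$, so at application time the stack side reduces to $\meanss{e''}(\tuple{\n',\Empty}::S^*)\h_s[x\mapsto d_s]$ while the eager side reduces to $\means{e''}\n'(\privs(S^*)\intersect\A(\n'))\h_e[x\mapsto d_e]$. The top principal of the new stack is $\n'$ and its $\privs$ equals the eager privilege set by Fact~\ref{fac:emp}, so the IH on $e''$ (standardness is preserved under subexpressions) delivers the required relation. Without standardness there is no way to align the two semantics: the eager closure captures the principal $\n$ at the lambda's definition site, while the stack closure inherits whatever principal sits atop the caller's stack $S^*$, and these will differ in general. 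Standardness is precisely the syntactic discipline that forces every function invocation to refresh the current principal via signs, and Fact~\ref{fac:emp} supplies the matching privilege-set arithmetic; together they close the loop. This mismatch between the lexically captured principal and the dynamically inherited stack top is the main obstacle the proof has to clear.
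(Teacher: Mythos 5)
Your proposal follows essentially the same route as the paper: the same type-indexed logical relation $\Sim$ (with equality at $\BOOL$, the quantification over a call-time stack $S$ paired with $\privs~S$ at arrow types), the same strengthened lemma over open terms and related environments, Fact~\ref{fac:emp} for the signs and lambda cases, the privilege-set identity for $\ENABLE$, and an admissibility lemma for the fixpoint iterates. One small point to make explicit when you write it out: unlike $\Rel$ in Lemma~\ref{lem:rem}, your relation must declare $\Sim~\ty~\star~\star$ to be \emph{true}, since the failing branch of $\chk{\p}{e}$ yields $\star$ in both semantics and the lemma must still hold there.
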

\begin{proof}
Immediate consequence of Lemma~\ref{lem:sim} and definition  
$\Sim~\BOOL$ below.
\end{proof}

As in the proof of safety, we need to generalize the result to allow
nonempty contexts.  We also consider expressions of arrow type, for
which a logical relation is needed.

\begin{definition} 
Define data-type indexed family 
$\Sim~\ty \subseteq \means{\ty}_{\bot\star}\times\meanss{\ty}_{\bot\star}$ as 
follows.  
For any $\ty$, $\Sim~\ty~d~d'$ is true if $d=d'$ and $d\in\{\bot,\star\}$;
it is false if $d\neq d'$ and $d$ or $d'$ is in $\{\bot,\star\}$.  Otherwise:
\begin{eqnarray*}
\Sim~ {\BOOL}~b~b' &\iff & b=b' \\
\Sim~(\ty_1\to\ty_2)~f~f' &\iff &
\forall S\in\stacks . \forall d\in\means{\ty_1} . \forall d'\in\meanss{\ty_1} . \\
&&  \Sim~\ty_1~d~d' \implies
\Sim~\ty_2~  (f~(\privs~S)~d) ~ (f'~S~d') 
\end{eqnarray*}
\end{definition}
\begin{sloppypar}
An environment $\h\in\means{\D}$ simulates an environment $\h'\in\meanss{\D}$, written 
$\Sim~\D~\h~\h'$, provided $\Sim~(\D.x)~(\h.x)~(\h'.x)$ for all 
$x\in\dom(\h)$.
\end{sloppypar}

\begin{lemma}
\label{lem:lub}
The relation $\Sim$ preserves lubs. That is, for any $\ty$,
if $u:\nats\to\means{\ty}$ and $u':\nats\to\meanss{\ty}$
are ascending chains and  
$\forall i.\Sim~\ty~u_i~u'_i$ then $\Sim~\ty~(\Lub_{i}u_i)~(\Lub_{i}u'_i)$.
\end{lemma}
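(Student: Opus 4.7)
The plan is to proceed by structural induction on $\ty$, following the template already established by Lemma~\ref{lem:adm} and Fact~\ref{fac:reladm} for the corresponding admissibility results about $\safe$ and $\Rel$. The overall shape of the argument should be essentially a logical-relations transport of admissibility through the function-space former, using the crucial fact that lubs in the continuous function spaces $\means{\ty_1\to\ty_2}$ and $\meanss{\ty_1\to\ty_2}$ are computed pointwise.

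For the base case $\ty=\BOOL$, I would observe that $\means{\BOOL}_{\bot\star}$ and $\meanss{\BOOL}_{\bot\star}$ are flat cpos, so any ascending chain is either constantly $\bot$ or eventually constant at some $v\in\{\mathsf{true},\mathsf{false},\star\}$. The hypothesis $\Sim~\BOOL~u_i~u'_i$ forces $u_i=u'_i$ at all non-$\bot$ positions, so $\Lub_i u_i$ and $\Lub_i u'_i$ are either both $\bot$ or both equal to a common value in $\{\mathsf{true},\mathsf{false},\star\}$, and $\Sim~\BOOL$ then holds by definition.

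For the inductive step $\ty=\ty_1\to\ty_2$, I would fix an arbitrary stack $S\in\stacks$ together with $d\in\means{\ty_1}$, $d'\in\meanss{\ty_1}$ satisfying $\Sim~\ty_1~d~d'$, and aim to show $\Sim~\ty_2~((\Lub_i u_i)(\privs~S)d)~((\Lub_i u'_i)S d')$. Instantiating the hypothesis $\Sim~(\ty_1\to\ty_2)~u_i~u'_i$ at $S,d,d'$ yields $\Sim~\ty_2~(u_i(\privs~S)d)~(u'_i S d')$ for each $i$. These two sequences are ascending chains in $\means{\ty_2}_{\bot\star}$ and $\meanss{\ty_2}_{\bot\star}$, so the induction hypothesis at $\ty_2$ gives $\Sim~\ty_2~(\Lub_i(u_i(\privs~S)d))~(\Lub_i(u'_iS d'))$. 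Since function-space lubs are pointwise, $\Lub_i(u_i(\privs~S)d)=(\Lub_i u_i)(\privs~S)d$ and likewise on the right, which is the desired conclusion.

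The only mildly delicate point I anticipate is the bookkeeping in the base case, because chains in a lifted cpo can sit at $\bot$ for a while and then jump to a value or to $\star$; once this is handled by the two-chains-agree-pointwise observation, the inductive step is wholly routine. The argument is strictly parallel to Fact~\ref{fac:reladm}; the only real distinction is notational, namely that the eager side is applied to $\privs~S$ while the stack side is applied to $S$, but this is fixed data in the instantiation and causes no difficulty.
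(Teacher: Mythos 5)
Your proposal is correct and follows essentially the same route as the paper's proof: structural induction on $\ty$, with the base case settled by the definition of $\Sim~\BOOL$ and the arrow case by instantiating the hypothesis at $S,d,d'$, applying the induction hypothesis at $\ty_2$ to the resulting chains, and using pointwise lubs. Your extra care in the base case about chains that dwell at $\bot$ before stabilizing is harmless but not needed under the lemma's literal hypothesis that the chains lie in $\means{\ty}$ and $\meanss{\ty}$ (for $\BOOL$ these are discretely ordered, so the chains are constant).
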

\begin{proof}
Go by structural induction on $\ty$. Assume that $\Sim~\ty~u_i~u'_i$. When
$\ty=\BOOL$, by definition $\Sim$ we obtain, for each $i$, $u_i=u'_i$.
Thus $\Sim~\ty~(\Lub_{i}u_i)~(\Lub_{i}u'_i)$.

When $\ty=\ty_1\to\ty_2$, consider any $P, S, d, d'$ with $P=\privs(S)$
and $\Sim~\ty_1~d~d'$. We must show 
\linebreak
$~\Sim~\ty_2~((\Lub_{i}u_i) P d)~((\Lub_{i}u'_i) S d')$, \ie, 
by definition of lubs we must show, 
$\Sim~\ty_2~\Lub_{i}(u_i P d)~\Lub_{i}(u'_i S d')$. 
\linebreak
By assumption, for every $i$, $\Sim~(\ty_1\to\ty_2)~u_i~u'_i$, hence, 
$\Sim~\ty_2~(u_i P d)~(u'_i S d')$ holds for each $i$. Therefore, 
\linebreak
by induction for $\ty_2$, we obtain 
$\Sim~\ty_2~\Lub_{i}(u_i P d)~\Lub_{i}(u'_i S d')$.
\end{proof}

\begin{lemma}
\label{lem:sim}
For any  stack $(\tuple{\n,P'}::S)$, for any 
standard expression $e$, and any $\D,\ty,\h, \h'$,
let $u=\means{\D\proves e:\ty}\n P \h$ where $P=\privs(\tuple{\n,P'}::S)$, 
and let 
\( u'=\meanss{\D\proves e:\ty}(\tuple{\n,P'}::S)\h'\).
Then $\Sim~\D~\h~\h'\implies \Sim~\ty~u~u'$.
\end{lemma}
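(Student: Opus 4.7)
The plan is to proceed by induction on the typing derivation of $e$, maintaining the invariant (already built into the statement) that the principal $n$ used by the eager semantics coincides with the principal on top of the stack being inspected. Throughout, we assume $\Sim~\D~\h~\h'$ and let $u,u'$ be the two denotations as in the statement. The straightforward cases ($\TRUE$, $x$, $\ifthenelse{}{}{}$, application, and $\letdecl{\recdecl{f}{x}{e_1}}{e_2}$) unwind both semantics in lockstep, invoke the induction hypothesis on sub-terms, and appeal to the definition of $\Sim$ at the relevant types. In the application case $e_1\;e_2$, after obtaining $\Sim~(\ty_1\to\ty_2)~f~f'$ and $\Sim~\ty_1~d~d'$ by the induction hypothesis, we instantiate the arrow clause of $\Sim~(\ty_1 \to \ty_2)~f~f'$ at $S' = \tuple{n,P'}::S$, which is precisely aligned with the current $P = \privs(\tuple{n,P'}::S)$.

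For the security constructs, each case reduces to a short calculation showing that the eager ``privilege set'' parameter stays in sync with the stack's computed $\privs$. For $\signs{\n'}{e}$, the eager semantics uses $P \intersect \A(\n')$ while the stack semantics pushes $\tuple{\n',\Empty}$ on top of $(\tuple{\n,P'}::S)$; these agree by Fact~\ref{fac:emp}, so the induction hypothesis applies with $\n'$ now on top. For $\enable{\p}{e}$, one checks directly from the definitions of $\check$ and $\sqcup_{\n}$ that $\privs(\tuple{\n, P_0 \union \{\p\}}::S) = P \sqcup_{\n} \{\p\}$ (splitting on whether $\p \in \A(\n)$), after which the induction hypothesis applies. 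For $\chk{\p}{e}$ and $\test{\p}{e_1}{e_2}$, the equivalence $\p \in P \iff \check(\p, \tuple{\n,P'}::S)$ is immediate from the defining clause $\p \in \privs\;S \iff \check(\p, S)$, so the two semantics agree on which branch to take and the induction hypothesis handles that branch.

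The abstraction case $\LAM{x}{e}$ is where standardness of $e$ is essential. To establish $\Sim~(\ty_1 \to \ty_2)~u~u'$, one picks an arbitrary stack $S''$ and related arguments $d,d'$, and must relate $u\;(\privs\;S'')\;d$ to $u'\;S''\;d'$. The obstacle is that the top principal of $S''$ need not equal $n$, so the induction hypothesis for $e$ cannot be applied directly. Standardness rescues this: $e$ must be of the form $\signs{\n''}{e''}$, so in both semantics we can take one more unfolding step, yielding on the eager side $\means{e''}\n''\;(\privs\;S'' \intersect \A(\n''))\;\ext{\h}{x}{d}$ and on the stack side $\meanss{e''}\;(\tuple{\n'',\Empty}::S'')\;\ext{\h'}{x}{d'}$. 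By Fact~\ref{fac:emp} these are in the form required by the induction hypothesis, which then closes the case. The letrec case is analogous: standardness of $e_1$ allows the same realignment, and $\Sim$ of the two fixed points is established by an inner induction on the approximants $g_i,g_i'$ followed by Lemma~\ref{lem:lub} to pass to the limits.

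The main obstacle is precisely this abstraction case: without standardness, the eager semantics has no way to recover the ``owner'' of a function body at the point of call, whereas the stack semantics inherits a principal from the calling context, so the two semantics cannot in general be brought into alignment at an arbitrary call site. Once the standardness-driven bridging argument via Fact~\ref{fac:emp} is in place, every other case amounts to a direct calculation using the definitions of $\privs$ and $\check$ alongside the induction hypothesis.
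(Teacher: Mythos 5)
Your proposal is correct and follows essentially the same route as the paper's proof: induction on $e$, with standardness used exactly where you place it (the abstraction and letrec bodies must be $\signs{}$-expressions so that Fact~\ref{fac:emp} realigns the eager privilege set with the pushed stack frame), the arrow clause of $\Sim$ instantiated at the current stack for application, an inner induction on approximants plus Lemma~\ref{lem:lub} for the fixed points, and direct $\privs$/$\check$ calculations for the security constructs. No gaps.
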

The Consistency Theorem follows from the lemma because 
$\Sim~\Empty~\{\}~\{\}$ and since $\Sim~\BOOL~u~u'$ implies $u=u'$.
\begin{xproof} of Lemma. Go by induction on $e$.  
\begin{itemize}
\item Cases $\TRUE$ and $x$:
Immediate from semantic definitions.
\item Case $\ifthenelse{e}{e_1}{e_2}$:
Directly by induction.
\item Case $\LAM{x}{e}$:
Let $u=\means{\D\proves\LAM{x}{e}:\ty_1\to\ty_2}\n P \h$ and
let \[ u'=\meanss{\D\proves\LAM{x}{e}:\ty_1\to\ty_2}S \h' \] Then 
\(\begin{array}[t]{rcl}
u &=&
\lam{P'}
    {\lam{d}
    {\means{\D,x:\ty_{1}\proves e:\ty_{2}}\n P'\ext{\h}{x}{d}}}\\
u'&=&
\lam{S'}
    {\lam{d'}
         {\meanss{\D,x:\ty_{1}\proves e:\ty_{2}}S' \ext{\h'}{x}{d'}}}
\end{array}
\)\\
To show $\Sim~(\ty_1\to\ty_2)~u~u'$, need to show that for any $S'',
d'', d'''$, such that $\Sim~\ty_1~d''~d'''$, it is
the case that $\Sim~\ty_2~(u~(\privs~S'')~d'')~(u'~S''~d''')$.  By
standardness, $e$ is $\signs{\n'}{e'}$ for some $\n',e'$.  Thus we can
proceed as follows, using $e \equiv \signs{\n'}{e'}$ and semantics of
$\mathtt{signs}$.
\begin{eqnarray*}
u~(\privs~S'')~d''
&=&
\means{\D,x:\ty_1\proves e:\ty_2}\n~(\privs~S'')~\ext{\h}{x}{d''}\\
&=& 
\means{\D,x:\ty_1\proves e':\ty_2}\n'~(\privs(S'')\intersect\A(\n'))~\ext{\h}{x}{d''}\\[2mm]
u'~S''~d'''
&=&
\meanss{\D,x:\ty_1\proves e:\ty_2}~S''~\ext{\h'}{x}{d'''}\\
&=& 
\meanss{\D,x:\ty_1\proves e':\ty_2}(\tuple{\n',\Empty}::S'')~\ext{\h'}{x}{d'''}
\end{eqnarray*}
Note that by definition $\Sim$ and by assumption $\Sim~\ty_1~d''~d'''$,
we have, $\Sim~(\D,x:\ty_1)~\ext{h}{x}{d''}~\ext{h'}{x}{d'''}$. Furthermore,
by Fact~\ref{fac:emp}, 
$\privs(S'')\intersect\A(\n')=\privs(\tuple{\n',\Empty}::S'')$. 
Therefore, by induction for $e'$, we obtain, $\Sim~\ty_2~(u~(\privs~S'')~d'')~(u'~S''~d''')$. This is where we need Definition~\ref{def:std}.
\item Case $e_1e_2$:
\(\begin{array}[t]{rcl}
\means{\D\proves e_1\;e_2:\ty_2}\n P \h &=&
  \vmlet{f\;=\;\means{\D\proves e_1:\ty_1\to\ty_2}\n P \h}
        {\mletml{d}{\means{\D\proves e_2:\ty_1}\n P \h}{f P d}}\\
\meanss{\D\proves e_1\;e_2:\ty_2}S \h' &=&
  \vmlet{f'\;=\;\meanss{\D\proves e_1:\ty_1\to\ty_2}S \h'}
        {\mletml{d'}{\meanss{\D\proves e_2:\ty_1}S \h}{f' S d'}}
\end{array}\)\\
Need to show $\Sim~\ty_2~(f P d)~(f' S d')$. Since $\Sim~\D~\h~\h'$ and
$P=\privs(S)$, therefore, by induction for $e_1$, we have
$\Sim~(\ty_1\to\ty_2)~f~f'$. Similarly, by induction for $e_2$, we have
$\Sim~\ty_1~d~d'$. Hence the result follows by definition $\Sim$  
since $P=\privs(S)$. This case of the proof shows the necessity of defining
the relation $\Sim$.
\item Case $\letdecl{\recdecl{f}{x}{e_1}}{e_2}$:
\[
\begin{array}{lcl}
\multicolumn{3}{l}{
\means{\D\proves\letdecl{\recdecl{f}{x}{e_1}}{e_2}:\ty}\n P \h}\\
\multicolumn{3}{l}{
\qquad 
= \vmlet{G(g)\;=\;
\lam{P'}
    {\lam{d}
         {\means{\D,f:\ty_1\to\ty_2,x:\ty_1\proves e_1:\ty_2}\n P' 
          \recext{\h}{f}{g}{x}{d}}}}
{\means{\D,f:\ty_1\to\ty_2\proves e_2:\ty}\n P \ext{\h}{f}{\fix\;G}}
}\\
\multicolumn{3}{l}{
\meanss{\D\proves\letdecl{\recdecl{f}{x}{e_1}}{e_2}:\ty}S \h} \\
\multicolumn{3}{l}{
\qquad 
= \vmlet{G'(g')\;=\;
\lam{S'}
    {\lam{d'}
         {\meanss{\D,f:\ty_1\to\ty_2,x:\ty_1\proves e_1:\ty_2}S'
          \recext{\h'}{f}{g'}{x}{d'}}}}
{\meanss{\D,f:\ty_1\to\ty_2\proves e_2:\ty}S \ext{\h'}{f}{\fix\;G'}}} 
\end{array}
\]
To show the result, it suffices to show 
$\Sim~(\ty_1\to\ty_2)~(\fix\;G)~(\fix\;G')$, because then we can use induction
for $e_2$, noting that 
$\Sim~(\D,f:\ty_1\to\ty_2)~\ext{\h}{f}{\fix\;G}~\ext{\h'}{f}{\fix\;G'}$,
and that $P=\privs(S)$.
Accordingly, we demonstrate the following claim:
\begin{eqnarray}
\label{sim:1}
& \forall i.\Sim~(\ty_1\to\ty_2)~g_i~g'_i 
\end{eqnarray}
Then from Lemma~\ref{lem:lub}, we get 
$\Sim~(\ty_1\to\ty_2)~\Lub_{i}g_i~\Lub_{i}g'_i$. This completes the
proof. To show (\ref{sim:1}), we proceed by induction on $i$. We have:
\begin{eqnarray*}
g_0 &=& 
\lam{P'}{\lam{d}{\bot}}\\
g_{i+1} &=&
\lam{P'}{\lam{d}
             {\means{\D,f:\ty_1\to\ty_2,x:\ty_1\proves e_1:\ty_2}\n P' 
              \recext{\h}{f}{g_i}{x}{d}}}\\
&=&\{\mathrm{because}\; e_1\equiv\signs{\n'}{e'_1}\;\mathrm{by\; standardness} \}\\
& &
\lam{P'}{\lam{d}
             {\means{\D,f:\ty_1\to\ty_2,x:\ty_1\proves e_1:\ty_2}
              \n' (P'\intersect\A(\n'))\recext{\h}{f}{g_i}{x}{d}}}\\[3mm]
g'_0 &=& 
\lam{S'}{\lam{d'}{\bot}}\\
g'_{i+1} &=&
\lam{S'}{\lam{d'}
             {\meanss{\D,f:\ty_1\to\ty_2,x:\ty_1\proves e_1:\ty_2} S'
              \recext{\h'}{f}{g'_i}{x}{d'}}}\\
&=&\{\mathrm{because}\; e_1\equiv\signs{\n'}{e'_1}\}\\
& &\lam{S'}{\lam{d'}
             {\meanss{\D,f:\ty_1\to\ty_2,x:\ty_1\proves e'_1:\ty_2} 
              (\tuple{\n',\Empty}::S')\recext{\h'}{f}{g'_i}{x}{d'}}}
\end{eqnarray*}
Clearly, $\Sim~(\ty_1\to\ty_2)~g_0~g'_0$, by
definition $\Sim$. To show $\Sim~(\ty_1\to\ty_2)~g_{i+1}~g'_{i+1}$, 
assume $\Sim~(\ty_1\to\ty_2)~g_{i}~g'_{i}$ (induction hypothesis), and that
for any $S'$ and $P'=\privs(S')$, $\Sim~\ty_1~d~d'$ holds.
Then \[
\Sim~(\D,f:\ty_1\to\ty_2,x:\ty_1)~\recext{\h}{f}{g_i}{x}{d}~\recext{\h'}{f}{g'_i}{x}{d'}
\]   
by definition $\Sim$ and since $\Sim~D~\h~\h'$. Now by Fact~\ref{fac:emp},
$P'\intersect\A(\n')=\privs(\tuple{\n',\Empty}::S')$, so by the main
induction hypothesis on $e'_1$, 
$\Sim~\ty_2~(g_{i+1}P'd)~(g'_{i+1}S'd')$ holds.
\item 
\begin{sloppypar}
Case $\signs{\n}{e}$:
We have:
$\means{\D\proves\signs{\n'}{e}:\ty}\n P \h =
\means{\D\proves e:\ty}\n'(P\intersect\A(\n')) \h$ and
$\meanss{\D\proves\signs{\n'}{e}:\ty}S \h' =
\meanss{\D\proves e:\ty} (\tuple{\n',\Empty}::S) \h'$
so the result holds by induction on
$e$ provided $P'\intersect\A(\n') = \privs(\tuple{\n',\Empty}::S)$.
But this equality holds by Fact~\ref{fac:emp}.
\end{sloppypar}
\item Case $\enable{\p}{e}$:
The result holds by induction for $e$, provided that
$P\sqcup_{\n}\{\p\} 
= \privs( \tuple{\n,P'\union\{\p\}}::S ) $.  This holds because for any $\p'$
\[\begin{array}{lcll}
& & \p'\in P\sqcup_{\n}\{\p\} \\
&\iff& \p'\in P \lor (\p'\in \A(\n) \land \p'=\p) & \mbox{by def $\sqcup_{\n}$} \\
&\iff& \check(\p', \tuple{\n,P'}::S ) 
       \lor (\p'\in \A(\n) \land \p'=\p) & \mbox{assumption, def $\privs$} \\
&\iff& (\p'\in \A(n)\land (\p'\in P'\lor \check(\p', S )) 
       \lor (\p'\in \A(\n) \land \p'=\p) & \mbox{def $\check$} \\
&\iff& \p'\in \A(n)\land (\p'\in P'\union\{\p\} \lor \check(\p', S ) ) & \mbox{logic
  and sets}\\
&\iff& \p'\in \privs( \tuple{\n,P'\union\{\p\}}::S ) & \mbox{defs $\check$ and
  $\privs$} 
\end{array} \]
\item Case $\chk{\p}{e}$:
Both semantics are conditional; the condition in one case is $\p\in P'$ and 
in the other case $\check(\p,S)$, and these are equivalent conditions by 
assumption $P'=\privs(S)$ for the Lemma.   In case the condition is true, 
the result holds by induction, which applies because for both semantics the 
security arguments for $e$ are unchanged.
If the condition is false, the result holds because both semantics are $\star$
and $\Sim~\ty~\star~\star$.
\item Case $\test{\p}{e_1}{e_2}$:
Similar to the case for $\CHK$.
\end{itemize}
\end{xproof}

\section{Conclusion}
\label{sec:disc}

Our work was motivated by the hope, inspired by discussions with Dave Schmidt, for more principled semantics of static analyses presented in the form of type and effect systems.  
Our work serves to demonstrate two attractive features of denotational semantics, which a decade ago seemed largely eclipsed by operational semantics.  
The first is proof of program equalities via equational reasoning on denotations.
The second is logical relations, defined by structural recursion on types.
We are glad for the opportunity to demonstrate the utility of denotational semantics while celebrating the contributions of Dave Schmidt, so adept a practitioner of all forms of semantic modeling.

\bibliographystyle{eptcs}
\bibliography{schmidt.bib}










\end{document}